\newtheorem{theorem}{Theorem}
\newtheorem{lemma}[theorem]{Lemma}
\newtheorem{myrule}{Rule}
\newcommand{\comments}[1]{}
\newcommand{\problem}{{\sc{\{Claw, Diamond\}-Free Edge Deletion}}}
\newcommand{\probshort}{\sc{CDFED}}
\newcommand{\noins}{NO-instance}
\newcommand{\yesins}{YES-instance}
\newcommand{\cB}{{\mathcal{B}}}
\newcommand{\edge}[2]{(#1,#2)}
\newcommand{\nsmbg}{2k+2}
\newcommand{\nbg}{2k+3}
\newcommand{\bigo}{O}
\newcommand{\np}{NP}
\newcommand{\nph}{{\np}-hard}
\newcommand{\fpt}{FPT}
\newcommand{\fst}{$H$}
\newcommand{\myfig}[1]{Figure~\ref{#1}}
\newcommand{\abs}[1]{|#1|}
\newcommand{\EP}[4]{
\begin{center}
{\small
\begin{tabularx}{0.98\columnwidth}{ll}
\toprule
\multicolumn{2}{c}{\textsc{#1}} \\
\midrule
{\bf Input:}   & \parbox[t]{0.85\columnwidth}{#2\vspace*{1mm}}  \\
{\bf Parameter:}   & \parbox[t]{0.85\columnwidth}{#3\vspace*{1mm}}  \\
{\bf Question:}& \parbox[t]{0.85\columnwidth}{#4\vspace*{.5mm}} \\
\bottomrule
\end{tabularx}
}
\end{center}
}
\journal{Theoretical Computer Science}
\begin{document}
\begin{frontmatter}

\title{Improved Kernel and Algorithm for Claw and Diamond Free Edge Deletion Based on Refined Observations\tnoteref{mytitlenote}}
\tnotetext[mytitlenote]{This work was supported by the National Natural Science Foundation of China (61872048, 61972330), and the Postgraduate Scientific Research Innovation Project of Hunan Province (CX20200883).}

\author[wjl]{Wenjun Li}
\ead{lwjcsust@csust.edu.cn}

\author[wjl]{Huan Peng}
\ead{3898950557@qq.com}

\author[yyj]{Yongjie Yang
}
\ead{yyongjiecs@gmail.com}


\address[wjl]{School of Computer and Communication Engineering, Hunan Provincial Key Laboratory of Intelligent Processing of Big Data on Transportation,\\ Changsha University of Science and Technology, Changsha, China}
\address[yyj]{Chair of Economic Theory, Saarland University, Saarbr\"{u}cken, Germany}

\begin{abstract}
In the {\problem} problem ({\probshort}), we are given a graph~$G$ and an integer $k>0$, and the question is whether there are at most~$k$ edges whose deletion results in a graph without claws and diamonds as induced subgraphs. Based on some refined observations, we propose a kernel of~$\bigo(k^3)$ vertices and~$\bigo(k^4)$ edges, significantly improving the previous kernel of~$\bigo(k^{12})$ vertices and~$\bigo(k^{24})$ edges. In addition, we derive an $\bigo^*(3.792^k)$-time algorithm for {\probshort}.
\end{abstract}

\begin{keyword}
edge deletion\sep kernelization\sep {\fpt}-algorithms \sep claw\sep diamond
\end{keyword}
\end{frontmatter}


\section{Introduction}
Graph modification problems consist in transforming a given graph into a desired graph by modifying the graph in a certain way (e.g., adding/deleting a limited number of vertices/edges). These problems are relevant to a wide range of real-world applications. As the number of modification operations allowed to be performed is expected to be small in many applications, graph modification problems have been extensively studied from the parameterized complexity perspective (see, e.g.,~\cite{
DBLP:journals/algorithmica/BetzlerBBNU14,
DBLP:journals/ipl/Cai96,
DBLP:journals/algorithmica/CyganMPPS14,
DBLP:conf/wg/KomusiewiczNN15,
DBLP:journals/tcs/LiuWYCC15,
DBLP:journals/algorithmica/MarxS12,
DBLP:journals/jda/Sritharan16}).
We refer to~\cite{DBLP:journals/corr/abs-2001-06867} for a comprehensive survey of the recent progress on the parameterized complexity of graph modification problems.

In this paper, we study the {\problem} problem ({\probshort}) which is a special case of the $\mathcal{H}$-{\sc{free edge deletion}} problem, where $\mathcal{H}$ is a set of graphs. A graph is $\mathcal{H}$-free if it does not contain any graph in~$\mathcal{H}$ as an induced subgraph. In the $\mathcal{H}$-{\sc{free edge deletion}} problem, we are given a graph~$G$ and an integer~$k$, and the question is whether there are at most~$k$ edges whose deletion results in an $\mathcal{H}$-free  graph. If~$\mathcal{H}$ consists of the claw and the diamond graphs, we have exactly the {\probshort} problem. A claw is a star with three leaves, and a diamond is a complete graph of four vertices with one arbitrary edge missing. Cygan~et~al.~\cite{DBLP:journals/mst/CyganPPLW17} initialized the study of {\probshort}. In particular, they proved that {\probshort} is {\nph} and does not admit a subexponential-time algorithm unless the Exponential Time Hypothesis (ETH) fails and, moreover, these hold even when restricted to graphs of maximum degree~$6$. On the positive side, they derived a kernel of~$\bigo(k^{12})$ vertices and~$\bigo(k^{24})$ edges for an annotated version of {\probshort}. In particular, in the annotated version, we are given an additional subset~$S$ of vertices and the question is whether we can delete at most~$k$ edges between vertices in~$S$ so that the resulting graph does not contain any claw or diamond as induced subgraphs. When~$S$ is the vertex set of the given graph, we have the {\probshort} problem.
Our main contribution is the following result.

\begin{theorem}
\label{thm-kernel}
{\probshort} admits a kernel of~$\bigo(k^3)$ vertices and~$\bigo(k^4)$ edges.
\end{theorem}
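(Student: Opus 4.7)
The plan is to prove Theorem~\ref{thm-kernel} via a modulator-based kernelization supported by around five polynomial-time reduction rules. First, compute in polynomial time a maximal family $\mathcal{P}$ of pairwise edge-disjoint induced claws and diamonds. If $|\mathcal{P}| > k$, return \noins{}, since every solution must delete at least one edge from each member. Otherwise set $X := V(\mathcal{P})$, so $|X| \le 4k$. By maximality of $\mathcal{P}$, every induced claw or diamond in $G$ shares an edge with some member, hence $G - X$ is both claw-free and diamond-free. This decomposes $V(G)$ into a modulator $X$ of size $\bigo(k)$ and a remainder $R := V(G) \setminus X$ whose induced subgraph has a rigid structure: every vertex's neighborhood is a disjoint union of cliques (claw-freeness), and every edge lies in a unique maximal clique (diamond-freeness).

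The reduction rules I would design include: (a) a triviality rule returning YES if $G$ contains no induced obstruction; (b) a ``useless vertex'' rule removing any $v \in R$ outside every induced claw or diamond; (c) a ``heavy edge'' rule deleting any edge contained in more than $k$ edge-disjoint obstructions and decrementing $k$, since such an edge must lie in every solution; (d) a ``twin'' rule bounding the number of vertices in $R$ with identical $N(v) \cap X$ that lie within a common maximal clique of $G[R]$ via an exchange argument; and (e) a ``clique-trimming'' rule bounding the size of each maximal clique in $G[R]$ by retaining only vertices that witness obstructions involving $X$. After exhaustive application, a charging argument assigns each surviving $v \in R$ to a pair (or triple) of vertices in $X$ that witness $v$'s role in an induced claw or diamond; with $\bigo(k^2)$ such pairs and $\bigo(k)$ charges per pair, this yields $|R| = \bigo(k^3)$ and hence the claimed vertex bound. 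The edge bound $\bigo(k^4)$ follows from an $\bigo(k)$ bound on the degree of each vertex, which the same rules enforce.

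The hard part will be proving the safety of the twin and clique-trimming rules. When two vertices $u, v \in R$ are indistinguishable from the perspective of $X$, any solution may still treat them asymmetrically; the safety proof requires a solution transformation swapping their roles without increasing cost, based on case analysis over every type of obstruction in which $u$ and $v$ can participate, either jointly or individually. For clique trimming, one must verify that removing a vertex from a large clique in $G[R]$ leaves sufficient surviving witnesses for every obstruction involving $X$, which relies delicately on the structural characterization of claw-free and diamond-free graphs noted above. Once these safety proofs are established, the counting bounds for $|R|$ follow relatively routinely.
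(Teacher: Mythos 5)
Your opening step---computing a maximal edge-disjoint packing of induced claws and diamonds, rejecting if it is too large, and taking $X$ to be its at most $4k$ vertices so that $G-X$ is \{claw, diamond\}-free---is exactly the paper's starting point. From there, however, the proposal has genuine gaps. The most serious is your rule (b): deleting a vertex $v\in R$ that lies in no induced claw or diamond of $G$ is not safe, because deleting edges can \emph{create} obstructions. A solution $S$ for $(G-v,k)$ may leave $v$ as a leaf of a new induced claw in $G-S$ (for instance when $v$ sits in a large clique of $G[R]$ from which $S$ removes edges), so the backward direction of soundness fails. The paper never removes non-isolated vertices for being obstruction-free; instead it uses the bag decomposition of $G-X$ (every vertex in exactly two bags, every edge in exactly one), classifies bags as attached, border, or outlier, and proves via dedicated lemmas (any minimal solution restricted to an unattached bag leaves a clique plus isolated vertices; big bags are untouched by any solution of size at most $k$) that deleting all edges inside outlier bags and certain border bags is safe. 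Your charging argument also fails to account for precisely the vertices this machinery is built for: vertices at distance two from $X$ (the paper's border bags) participate in no obstruction together with a pair of $X$-vertices, so they cannot be charged to such pairs, yet by the above they cannot simply be discarded either. In the paper they survive in $\bigo(k^2)$ border bags that are each capped at $\bigo(k)$ vertices by a gadget-replacement rule, which is where the $\bigo(k^3)$ really comes from.

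Beyond that, you explicitly defer the safety proofs of the twin and clique-trimming rules, which you correctly identify as the hard part; those exchange arguments are where essentially all of the work lies, so the proposal as written is a plan rather than a proof. Finally, the edge bound via an ``$\bigo(k)$ degree'' claim is not justified: a vertex of $X$ can retain one neighbor in each of $\Theta(k^2)$ bags after reduction and hence have degree $\Omega(k^2)$. The paper instead obtains $\bigo(k^4)$ by counting edges separately inside $X$, between $X$ and the bags, inside the $\bigo(k)$ attached bags of size $\bigo(k)$, and inside the $\bigo(k^2)$ border bags of size $\bigo(k)$; the last category is the $\bigo(k^4)$ bottleneck.
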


Then, based on some refined observations we develop an $\bigo^*(3.792^k)$-time algorithm for {\probshort}%
. \footnote{A recent 3-page paper  by Tsur, posted on https://arxiv.org, has improved our result to an algorithm running in~$\bigo^*(3.562^k)$ time~\protect\cite{DBLP:journals/corr/abs-1908-07318}. However, for the following reasons, we present our algorithm in the paper. First, the algorithm of Tsur is built upon the main idea of our algorithm. Particularly, in our algorithm, when branching a diamond, we consider one additional vertex outside the diamond. Tsur's algorithm refines our algorithm by further considering one more vertex.
Second, in the new algorithm, Tsur used a python program to compute the worst branching configuration (the program code has not been reported in~\protect\cite{DBLP:journals/corr/abs-1908-07318}). We provide the detailed description of all possible cases. Therefore, our algorithm is more transparent and self-contained.}

\begin{theorem}
\label{thm-algorithm}
{\probshort} can be solved in~$\bigo^*(3.792^k)$ time.
\end{theorem}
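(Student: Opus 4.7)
The plan is to design a branching algorithm that, at each recursive call, locates a forbidden induced subgraph---either a claw or a diamond---and branches over its edges, decrementing $k$ by the number of edges placed into the deletion set in that branch. The algorithm terminates when no forbidden subgraph remains; the worst-case branching vector determines the running time.

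If the current graph contains an induced claw with center $x$ and leaves $a,b,c$, the algorithm simply branches over the three edges $xa, xb, xc$. This gives branching vector $(1,1,1)$ and branching factor $3$, which is well within the target. The harder case is when the graph is claw-free but still contains a diamond $D = \{a,b,c,d\}$ with missing edge $bc$; a naive branching over its $5$ edges only yields factor $5$.

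To improve in the second case, I would exploit the structural constraints imposed by claw-freeness around $D$. In particular, every vertex in $N(a)\setminus\{b,c,d\}$ must be adjacent to $b$ or $c$, for otherwise a claw centered at $a$ with leaves $v,b,c$ would appear; symmetrically for $N(d)\setminus\{a,b,c\}$, and analogous constraints hold around $b$ and $c$. Using these, I aim to show that for at least three of the five diamond edges, selecting that edge for deletion either immediately creates a new forbidden induced subgraph which must be resolved by a specific additional edge deletion, or provably forces a companion deletion among the remaining diamond edges (for instance, after deleting $bd$, a second deletion in $\{ab, ad, cd\}$ may be forced in order to prevent a newly exposed claw or diamond from surviving). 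The targeted branching vector is $(1,1,1,2,2,2)$, whose characteristic equation $x^{2}-3x-3=0$ has positive root $(3+\sqrt{21})/2<3.792$. Combined with the Step~1 branching of factor $3$, the overall running time is $O^{*}(3.792^{k})$.

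The main obstacle will be the exhaustive case analysis on a claw-free diamond: for every configuration of external neighbors of $\{a,b,c,d\}$ one must verify that the refined branching is \emph{correct} (every optimal solution lies in at least one branch), \emph{sound} (the companion deletion is genuinely forced, not merely convenient), and \emph{exhaustive} (it handles every configuration without falling into a worse branching vector). The most delicate subcases arise when several diamonds share edges or triangles and when a common neighbor is adjacent to both $b$ and $c$; here a careful choice of \emph{which} diamond to branch on---potentially after invoking the reduction rules of Theorem~\ref{thm-kernel} to simplify the local structure---should allow a uniform argument that the $(1,1,1,2,2,2)$ vector is always attainable.
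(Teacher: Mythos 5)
Your outline matches the paper's: branch on claws with vector $\langle 1,1,1\rangle$, then, in a claw-free graph, use the fact that any external neighbour of a degree-$3$ vertex of a diamond must also see one of the two non-adjacent vertices to refine the diamond branching, aiming at the worst-case vector $\langle 1,1,1,2,2,2\rangle$ with factor $(3+\sqrt{21})/2\approx 3.792$. However, the step that actually constitutes the proof --- deriving that vector --- is left as an acknowledged ``obstacle,'' and the mechanism you sketch does not produce it. If, as you describe, three of the five diamond edges each force one companion deletion and the other two are branched on singly, you get the five-branch vector $\langle 1,1,2,2,2\rangle$, not the six-branch $\langle 1,1,1,2,2,2\rangle$; and if a forced companion can be any of several edges (your example: after deleting one edge, ``a second deletion in $\{ab,ad,cd\}$ may be forced''), that single case spawns three $2$-branches and the vector degrades. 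So the accounting behind your claimed vector is not consistent as stated.

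What the paper actually does is different in its combinatorial core and is worth spelling out. Writing the diamond on $a,b,c,d$ with $a,c$ the degree-$3$ vertices: if $a$ and $c$ are twins (and in the trivial isolated case), a $\langle 1,1,1\rangle$ branch suffices. Otherwise there is a $t$ adjacent to exactly one of $a,c$, say $a$, and claw-freeness forces $t$ to see $b$ or $d$. If $t$ sees exactly one of them, say $b$, then $\{a,t,b,c\}$ is a \emph{second induced diamond sharing the triangle} $\{a,b,c\}$ with the first; one branches on the three shared edges singly, and in the complementary branches one must delete one ``private'' edge from each of the two diamonds, giving four $2$-branches of which one is dominated (it recreates a claw whose destruction is covered by other branches) --- hence $\langle 1,1,1,2,2,2\rangle$. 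If $t$ sees both $b$ and $d$, the five vertices span four edge-sharing diamonds, any solution deletes at least two of the eight local edges, and only $14$ of the $\binom{8}{2}=28$ pairs need to be branched on, giving factor $\sqrt{14}<3.792$. This two-overlapping-diamonds hitting argument and the separate treatment of the ``$t$ adjacent to both'' configuration are the missing ideas; without them your proposal is a plausible plan rather than a proof.
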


It should be pointed out that, when~$\mathcal{H}$ consists of only the diamond, the {$\mathcal{H}$-{\sc{Free Edge Deletion}} problem is known to admit a kernel of~$\bigo(k^3)$ vertices~\cite{
DBLP:conf/iwpec/SandeepS15}. However, for~$\mathcal{H}$ consisting of only the claw, whether the {$\mathcal{H}$-\sc{Free Edge Deletion}} problem admits a polynomial kernel remained open heretofore.

\section{Preliminaries}
The notation and terminology used in this paper mainly follow the work of Cygan~et~al.~\cite{DBLP:journals/mst/CyganPPLW17}. We study only undirected simple graphs.

For a graph~$G$, we use~$V(G)$ and~$E(G)$ to denote its vertex set and its edge set respectively.
For a vertex $v\in V(G)$,~$N_G(v)$ is the set of all neighbors of~$v$, i.e., $N_G(v)=\{u \mid \edge{v}{u} \in E(G)\}$.
An {\it{isolated vertex}} is a vertex without any neighbor. Two vertices are {\it{adjacent}} if there is an edge between them. For an edge~$\edge{v}{u}$, we say that~$v$ (resp.~$u$) and $\edge{v}{u}$ are {\it{incident}}.

For $S\subseteq V$, the subgraph of~$G$ induced by~$S$, denoted~$G[S]$, is the graph with vertex set~$S$ and edge set $\{\edge{v}{u}\in E(G) \mid v, u\in S\}$. In addition,~$E_G(S)$ is the set of all edges between vertices in~$S$ in~$G$, i.e., $E_G(S)=E(G[S])$. Throughout this paper, we  write~$E(S)$ for~$E_G(S)$. For $A\subseteq E(G)$ (resp.\ $A\subseteq V(G)$), $G-A$ is the graph obtained from~$G$ by deleting all edges (resp.\  vertices) in~$A$.
For a set~$F$ of pairs over~$V(G)$ such that $F\cap E(G)=\emptyset$,~$G+F$ is the graph obtained from~$G$ by adding edges between every pair in~$F$.
A subset $S\subseteq V$ is a {\it{clique}} if there is an edge between every two vertices in~$S$.
A {\it{maximal clique}} is a clique that is not a proper subset of any other clique.

A graph~$G$ is {\it{isomorphic}} to another graph~$G'$ if there is a bijection $f: V(G)\rightarrow V(G')$ such that for every $v, u\in V(G)$, it holds that $\edge{v}{u}\in E(G)$ if and only if $\edge{f(v)}{f(u)}\in E(G')$.

A {\it{claw}} is a graph with four vertices~$c$,~$\ell_1$,~$\ell_2$, and~$\ell_3$, and three edges~$\edge{c}{\ell_1}$,~$\edge{c}{\ell_2}$, and~$\edge{c}{\ell_3}$. The vertex~$c$ (resp.\ each~$\ell_i$, $1\leq i\leq 3$) is called the {\it{center}} (resp.\ {\it{leaf}}) of the claw. A {\it{diamond}} is a graph with four vertices and five edges.

A graph is $\{\text{claw, diamond}\}$-free if it does not contain any claws or diamonds as induced subgraphs.
A subset~$S$ of edges in a graph~$G$ is called a CDH (claw and diamond hitting) set of~$G$ if $G-S$ is  $\{\text{claw, diamond}\}$-free.
The problem studied in this paper is formally defined as follows.
\EP
{{\problem} (\probshort)}
{A graph $G=(V, E)$ and a positive integer~$k$.}
{$k$.}
{Does~$G$ have a CDH set of size at most~$k$?}

\medskip

{\bf{Parameterized Complexity~\cite{DBLP:series/txcs/DowneyF13}.}} A {\it parameterized problem} is a subset $Q \subseteq \Sigma^* \times
\mathbb{N}$ for some finite alphabet~$\Sigma$, where the second component is called the {\it parameter}. A {\it{kernelization}} of a parameterized problem~$Q$ is an algorithm that transforms every instance $(x, k)$ of~$Q$ into an instance $(x', k')$ of~$Q$ such that
(1) the algorithm runs in polynomial time in the size of the instance $(x, k)$;
(2) $(x,k) \in Q$ if and only if $(x',k') \in Q$;
(3) $k'\leq f(k)$ for some computable function~$f$;
and (4) $|x'|\leq g(k)$ for some computable function~$g$.
The new instance $(x',k')$ is called a {\it{kernel}} of~$Q$.

\section{A Structural Property of \{Claw, Diamond\}-Free Graphs}
Before giving the kernelization, let us first recall some important properties of \{claw, diamond\}-free graphs, which have been studied in~\cite{DBLP:journals/mst/CyganPPLW17}. We need the following notations from~\cite{DBLP:journals/mst/CyganPPLW17}.
A {\it{simplicial}} vertex is a vertex whose neighbors form a clique. A {\it{bag}} is a maximal clique or a simplicial vertex. For a \{claw, diamond\}-free graph~$H$, let~$\cB(H)$ be the set of all bags of~$H$.

\begin{lemma}[\cite{DBLP:journals/mst/CyganPPLW17}]
\label{lem-bags-computable-polynomial-time}
For every \{claw, diamond\}-free graph~$H$,~$\cB(H)$ can be computed in polynomial time.
\end{lemma}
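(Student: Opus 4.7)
The plan is to exploit diamond-freeness to control the number and the form of maximal cliques, so that every maximal clique can be read off from a single edge it contains. Once this is done, enumerating bags becomes routine.

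First I would establish the key structural claim: in any \{claw, diamond\}-free graph $H$, every edge lies in a \emph{unique} maximal clique, and moreover, for every edge $\edge{u}{v}\in E(H)$, the set
\[
C(u,v)\;=\;\{u,v\}\cup\bigl(N_H(u)\cap N_H(v)\bigr)
\]
is that maximal clique. The uniqueness uses only diamond-freeness: if two distinct maximal cliques $C_1,C_2$ both contained $\edge{u}{v}$, then picking any $w_1\in C_1\setminus C_2$ and $w_2\in C_2\setminus C_1$ would give five edges $uv,uw_1,vw_1,uw_2,vw_2$ on the four vertices $\{u,v,w_1,w_2\}$, and the missing edge $w_1w_2$ (which cannot be present, else $C_1$ or $C_2$ would not be maximal) produces an induced diamond. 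The same argument applied to any two common neighbors of $u$ and $v$ shows that $C(u,v)$ really is a clique, and it is maximal because every vertex that could extend it must be adjacent to both $u$ and $v$, hence already in $N_H(u)\cap N_H(v)$.

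With this in hand, the algorithm is immediate. I would compute the bag set $\cB(H)$ in three steps: (i) for every edge $\edge{u}{v}\in E(H)$, form $C(u,v)$ by taking $\{u,v\}\cup(N_H(u)\cap N_H(v))$, and collect the resulting sets into a list, removing duplicates; since each edge determines its unique maximal clique, the number of distinct maximal cliques of size $\geq 2$ is at most $|E(H)|$, so this phase runs in polynomial time; (ii) add every isolated vertex as a (singleton) maximal clique; (iii) for every vertex $v\in V(H)$, test whether $N_H(v)$ induces a clique by scanning all pairs in $N_H(v)$, and if so add $v$ to $\cB(H)$ as a simplified-vertex bag. Each of the three steps is clearly polynomial in $|V(H)|+|E(H)|$.

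The main (mild) obstacle is the first step: one has to be comfortable that diamond-freeness alone prevents the usual blow-up to exponentially many maximal cliques, and that the explicit formula $C(u,v)$ is correct. Once that structural observation is in place, the rest is bookkeeping. I would not need claw-freeness for the argument above, but it is harmless to assume it, and the enumeration of simplified vertices is straightforward in any graph.
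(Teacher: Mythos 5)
The paper does not actually prove this lemma: it is imported from Cygan et al.~\cite{DBLP:journals/mst/CyganPPLW17}, and the authors state explicitly that for such lemmas they only refer the reader to that paper. So there is no in-paper argument to compare against, and your proposal has to stand on its own. In substance it does: the route you take --- diamond-freeness forces every edge $\edge{u}{v}$ to lie in the unique maximal clique $\{u,v\}\cup(N_H(u)\cap N_H(v))$, hence there are at most $|E(H)|$ maximal cliques of size at least two and they can be listed edge by edge, after which simplified vertices are detected by a trivial neighborhood test --- is the standard argument for this fact, and the resulting algorithm is plainly polynomial. You are also right that claw-freeness plays no role here.

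One sub-step is mis-justified as written. In the uniqueness argument you pick arbitrary $w_1\in C_1\setminus C_2$ and $w_2\in C_2\setminus C_1$ and assert that $w_1w_2\notin E(H)$ ``else $C_1$ or $C_2$ would not be maximal.'' That implication is false: if $w_1w_2\in E(H)$ then $\{u,v,w_1,w_2\}$ is a clique, but it would simply sit inside some third maximal clique without contradicting the maximality of $C_1$ or $C_2$. The standard repair is to choose $w_2$ with care: since $w_1\notin C_2$ and $C_2$ is a maximal clique, $w_1$ must have a non-neighbor $w_2\in C_2$, and $w_2\notin\{u,v\}$ because $w_1\in C_1$ is adjacent to both; this $w_2$ produces the induced diamond. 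Alternatively, you do not need that argument at all: once you know $C(u,v)$ is a clique (your second observation, which is correctly proved --- two non-adjacent common neighbors of $u$ and $v$ induce a diamond with them), every clique containing $\edge{u}{v}$ is contained in $C(u,v)$, so the unique maximal clique through the edge is $C(u,v)$ itself. With that one repair the proof is complete and correct.
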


Lemma~\ref{lem-bags-computable-polynomial-time} directly implies that~$\cB(H)$ contains polynomially many bags.
The following lemma describes a structural property of \{claw, diamond\}-free graphs.

\begin{lemma}[\cite{DBLP:journals/mst/CyganPPLW17}]
\label{lem:bag-decomposition}
Let~$H$ be a \{claw, diamond\}-free graph without isolated vertices. Then, the following conditions hold:
\begin{enumerate}
	\item\label{p:ver} every vertex in~$H$ is included in exactly two bags;
	\item\label{p:edg} every edge in~$H$ is in exactly one bag;
	\item\label{p:inters} every two bags in~$\cB(H)$ share at most one common vertex; and
	\item\label{p:nonadj} for two bags~$B_1$ and~$B_2$ in $\cB(H)$ sharing a vertex~$v$, there are no edges between $B_1\setminus \{v\}$ and $B_2\setminus \{v\}$ in~$H$.
\end{enumerate}
\end{lemma}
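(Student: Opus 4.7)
The plan is to derive all four parts from a single structural claim: for every non-simplified vertex $v$ of $H$, the set $N(v)$ partitions into two nonempty cliques $A$ and $B$ with no edges between them, and $K_A := A \cup \{v\}$, $K_B := B \cup \{v\}$ are precisely the two maximal cliques containing $v$. Once this is established, Parts~\ref{p:ver}--\ref{p:nonadj} fall out quickly.

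To prove the structural claim, I would first fix non-adjacent $u, w \in N(v)$, which exist because $v$ is not simplified. For each $x \in N(v) \setminus \{u,w\}$, applying claw-freeness to $\{v,u,w,x\}$ (centered at $v$) forces $x$ to be adjacent to at least one of $u, w$, while diamond-freeness on the same four vertices forbids adjacency to both. Setting $A = \{u\} \cup \{x \in N(v) : xu \in E(H)\}$ and $B = N(v) \setminus A$ therefore gives a partition into nonempty parts. A short diamond argument on $\{v, u, x, y\}$ with $x, y \in A \setminus \{u\}$ (and the symmetric one for $B$ using $w$) shows that both $A$ and $B$ are cliques.

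The crucial step, which simultaneously settles Part~\ref{p:nonadj}, is the absence of edges between $A$ and $B$. Suppose for contradiction that $a \in A$, $b \in B$ with $ab \in E(H)$. For any $b' \in B$, if $ab' \notin E(H)$ then $\{v, a, b, b'\}$ induces a diamond (edges $va, vb, vb', ab, bb'$), so $a$ must be adjacent to every vertex of $B$. In particular $a \sim w$, contradicting the exclusive-adjacency property defining $A$, which gives $a \not\sim w$ (either because $a = u$ and $uw \notin E(H)$, or because $a \sim u$ and each $N(v)$-vertex is adjacent to exactly one of $u, w$). Maximality of $K_A$ and $K_B$ is then immediate, and the uniqueness (no other maximal clique through $v$) follows because any clique containing $v$ and vertices from both sides of the partition would witness a forbidden $A$-$B$ edge.

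With the structural claim in hand, the four parts are straightforward. Part~\ref{p:nonadj} is already proved. For Part~\ref{p:ver}, a non-simplified $v$ lies in precisely $K_A$ and $K_B$, while a simplified $v$ lies in its unique maximal clique $\{v\} \cup N(v)$ together with the singleton bag $\{v\}$; both yield exactly two bags. Part~\ref{p:edg} holds because at each endpoint of an edge $uv$ the other endpoint lies on exactly one side of the partition at $v$, so the edge is assigned to a unique maximal-clique bag (singleton bags contain no edges). Part~\ref{p:inters} is an immediate corollary of Part~\ref{p:edg}: two distinct bags sharing two vertices would also share the edge joining them. The main obstacle is the non-edge step of the structural claim; degenerate cases (for instance $|B|=1$ and hence $b=w$) have to be checked separately but are handled either by the same diamond configuration or directly by the defining non-adjacency $uw \notin E(H)$.
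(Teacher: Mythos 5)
Your proposal is correct, but note that the paper itself does not prove Lemma~\ref{lem:bag-decomposition}: it is imported from Cygan et al.\ with only a one-sentence heuristic (``Condition~\ref{p:ver} prohibits claws, the other three prohibit diamonds'') and a figure, so there is no in-paper argument to compare against. Your self-contained proof is sound and in fact sharper than that heuristic suggests: the key neighborhood-partition claim needs \emph{both} forbidden subgraphs (claw-freeness on $\{v,u,w,x\}$ gives ``$x$ is adjacent to at least one of $u,w$,'' diamond-freeness gives ``at most one''), and all four conditions then follow from the single statement that $N(v)$ splits into two cliques $A,B$ with no $A$--$B$ edges whose unions with $v$ are exactly the maximal cliques through $v$. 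I checked the delicate points: the diamond on $\{v,a,b,b'\}$ ruling out an $A$--$B$ edge is valid (five edges present, $ab'$ missing), the degenerate case $B=\{w\}$ collapses to the non-adjacency $uw\notin E(H)$ as you say, the simplified-vertex case of Part~\ref{p:ver} correctly uses the no-isolated-vertices hypothesis to make $\{v\}$ and $\{v\}\cup N(v)$ two \emph{distinct} bags, and Part~\ref{p:edg} is correctly counted from one endpoint's perspective (covering the case where both endpoints are simplified, since a simplified vertex lies in a unique maximal clique). The only cosmetic remark is that in Part~\ref{p:nonadj} you should state explicitly that when one of the two bags is a singleton $\{v\}$ the condition holds vacuously because $B_1\setminus\{v\}=\emptyset$; your argument as written addresses only the case where both bags are maximal cliques $K_A,K_B$.
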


In fact, in Lemma~\ref{lem:bag-decomposition}, the first condition prohibits the existence of induced claws, and other conditions prohibit the existence of induced diamonds. We remark that every nonisolated simplicial vertex belongs to the bag consisting of only itself and another bag of size at least two. {\myfig{fig-property-of-claw-diamond-free-graphs}} illustrates the above lemma. 
\begin{figure}[h!]
\begin{center}
\includegraphics[width=0.5\textwidth]{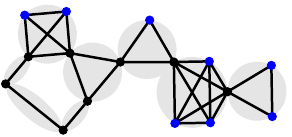}
\end{center}
\caption{A \{claw, diamond\}-free graph and its bags. Each bag of size at least two is emphasized in a gray area. Each blue vertex is a simplicial vertex.}
\label{fig-property-of-claw-diamond-free-graphs}
\end{figure}

\section{The Kernelization}
In this section, we study a kernelization of {\probshort} based on several reduction rules.
Let~$(G,k)$ be a given instance. As the kernelization in~\cite{DBLP:journals/mst/CyganPPLW17}, our kernelization begins with finding an arbitrary maximal collection (packing) of edge-disjoint induced claws and diamonds which clearly can be done in polynomial time. Let~$X$ denote the set of vertices that appear in some claw or diamond in the packing. Such a set~$X$ is called a {\it{modulator}} of~$G$. Clearly, $G-X$ is \{claw, diamond\}-free. If $|X|>4k$, we need to delete at least $k+1$ edges to destroy all induced claws and diamonds. So, in this case, the kernelization immediately returns a trivial {\noins}. We study five reduction rules to reduce the number of vertices in $G-X$.

A reduction rule is {\it{sound}} if each application of the reduction rule does not change the answer to the instance. An instance is {\it{irreducible}} with respect to a set of reduction rules if none of these reduction rules is applicable to the instance. We assume that, when a reduction rule is introduced, the instance is irreducible with respect to all reduction rules introduced before.
Moreover, after each application of a reduction rule, we recalculate a modulator~$X$ of~$G$.
In the proof of the soundness of a reduction rule, we will use $(G, k)$ and $(G', k)$ to denote the instances before and after the application of the reduction rule, respectively (none of our reduction rules changes the parameter~$k$).

The first rule is trivial and the soundness of the rule is easy to see.

\begin{myrule}
\label{rule_remove_components}
If there are isolated vertices, delete all of them.
\end{myrule}

Now we introduce four new rules.
To apply these rules, we classify all bags into three sets.
Bags whose vertices are all adjacent to at least one vertex in common in the modulator are called attached bags, others that share some common vertices with attached bags are called border bags, and the remaining ones are called outlier bags. By the forbidden of induced claws, one shall see that for every vertex in~$X$ there can be at most three bags attached to it. This directly bounds the number of attached bags with respect to the size of the modulator. Then, starting from this, the four new reduction rules described below shrink these three types of bags iteratively in both their numbers and sizes.

Formally, a bag $B\in \cB(G-X)$ is {\it{attached}} to a vertex $x\in X$ if
\begin{enumerate}
\item either $|B|\geq 2$ and~$x$ is adjacent to all vertices in~$B$; or
\item $B=\{v\}$,~$x$ is adjacent to~$v$, but~$x$ is not adjacent to all vertices in the other bag including~$v$. See Figure~\ref{fig-attach}.
\end{enumerate}

\begin{figure}
\begin{center}
\includegraphics[width=0.65\textwidth]{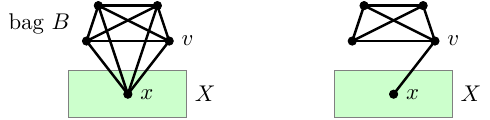}
\end{center}
\caption{On the left side,~$B$ is attached to~$x$, but the bag~$\{v\}$ is not attached to~$x$. In the right-handed figure the bag~$\{v\}$ is attached to~$x$.}
\label{fig-attach}
\end{figure}

A bag is {\it{attached}} if it is attached to at least one vertex in~$X$. For an unattached bag $B\in \cB(G-X)$ which shares a vertex with some attached bag, we call~$B$ a {\it{border bag}}. Note that a border bag can be also a simplicial vertex. For instance, in the graph on the left side of \myfig{fig-attach},~$\{v\}$ is a border bag%
. An unattached bag that is not a border bag is called an {\it{outlier}} bag.


The following rule shrinks outlier bags.

\begin{myrule}
\label{rule_remove_edges_in_outlier_bags}
If there is an outlier bag~$B\in \cB(G-X)$, delete all edges between vertices in~$B$ from~$G$.
\end{myrule}

A bag is \emph{small} if it has less than~$\nsmbg$ vertices; and is {\it{big}} otherwise.

\begin{myrule}
\label{rule_deleting_edges_in_irrevalent_bags}
If there is a border bag~$B\in \cB(G-X)$ which is of size at least~$2$ and does not share any vertex with any small attached bags, then delete all edges between vertices in~$B$ from~$G$.
\end{myrule}

Before proving the soundness of Rules~\ref{rule_remove_edges_in_outlier_bags} and~\ref{rule_deleting_edges_in_irrevalent_bags}, we study some properties.
For a bag $B\in \cB(G-X)$, let $A(B)\subseteq X$ be the set of vertices in~$X$ to which~$B$ is attached. Observe that if~$B$ is of size at least~$2$, then $B\cup A(B)$ is a clique in~$G$.
This is true because otherwise there is an induced diamond in $G[B\cup A(B)]$ that is edge-disjoint from every induced claw and diamond in~$G[X]$, contradicting the maximality of~$X$.
In addition, observe that deleting one edge from a clique of size at least~$4$ results in several induced diamonds. Hence, if a clique is too large, deleting one edge from the clique triggers the deletions of many other edges, in order to destroy the induced diamonds. These observations lead to the following lemma.

\begin{lemma}[\cite{DBLP:journals/mst/CyganPPLW17}]
\label{lem:large_bag_disjiont_with_solution}
Let $B\in \cB(G-X)$ be a big bag. 
Then, every CDH set of~$G$ of size at most~$k$ does not include any edge in $E(B\cup A(B))$.
\end{lemma}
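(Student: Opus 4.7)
The plan is to proceed by contradiction. Suppose some CDH set $F$ of $G$ with $|F|\leq k$ contains an edge $e=\edge{u}{v}$ lying in $E(B\cup A(B))$. The first ingredient, already noted just before the lemma statement, is that $K:=B\cup A(B)$ is a clique of $G$ whenever $|B|\geq 2$; since $B$ is big, this in particular yields $|K|\geq |B|\geq \nsmbg$.

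Next I would carry out a simple counting argument on the endpoints of $F$ inside $K$. Each of the at most $k$ edges of $F$ has two endpoints, so at most $2k$ vertices of $K$ are incident to any edge of $F$. Because $|K|\geq \nsmbg$, this leaves at least two vertices $s_1,s_2\in K$ that are incident to no edge of $F$, and in particular $s_1,s_2\notin\{u,v\}$ since $u$ and $v$ are themselves endpoints of $e\in F$.

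Finally, I would check that $\{u,v,s_1,s_2\}$ induces a diamond in $G-F$. Since $K$ is a clique of $G$, all six pairs among these four vertices are edges of $G$. The edge $\edge{u}{v}$ is deleted by $F$, but the remaining five edges $\edge{u}{s_1},\edge{u}{s_2},\edge{v}{s_1},\edge{v}{s_2},\edge{s_1}{s_2}$ survive in $G-F$, precisely because $s_1$ and $s_2$ are not incident to any edge of $F$. Hence $G-F$ contains an induced diamond on $\{u,v,s_1,s_2\}$, contradicting the assumption that $F$ is a CDH set. This proves that no edge of $E(B\cup A(B))$ can belong to any CDH set of size at most $k$.

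The argument is essentially a one-step pigeonhole combined with the elementary observation that a $4$-clique minus a single edge is a diamond, so I do not expect any real obstacle; the only subtlety worth flagging is that the threshold $\nsmbg$ in the definition of \emph{big} is exactly tight for this counting, which is why the size bound was chosen that way.
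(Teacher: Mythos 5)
Your proof is correct and is essentially the formalization of the argument the paper itself sketches just before the lemma statement (the paper defers the formal proof to Cygan et al.): $B\cup A(B)$ is a clique of size at least $\nsmbg$, a set of at most $k$ edges touches at most $2k$ of its vertices, and the two untouched vertices together with the endpoints of the deleted edge induce a diamond. The pigeonhole step and the tightness of the threshold $\nsmbg$ are exactly the intended mechanism, so there is nothing to add.
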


\begin{lemma}[\cite{DBLP:journals/mst/CyganPPLW17}]
\label{lem:diamond_inside_bag}
Let~$H$ be a subgraph (not necessarily induced) of~$G$ isomorphic to a diamond.
Let $B\in\cB(G-X)$ be an unattached bag containing at least two vertices of~$H$.
Then,~$B$ contains all vertices of~$H$.
\end{lemma}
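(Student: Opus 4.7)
My plan is to label the diamond $H$ so that $u_1, u_2$ are the two central (degree-$3$) vertices and $w_1, w_2$ the two tip (degree-$2$) vertices, with $w_1 w_2$ the missing edge of $K_4$. Using the symmetries $u_1 \leftrightarrow u_2$ and $w_1 \leftrightarrow w_2$, I may assume $V(H) \cap B$ is one of (i)~$\{u_1, u_2\}$, (ii)~$\{u_1, w_1\}$, or (iii)~$\{w_1, w_2\}$ (cases with more than two vertices of $V(H)$ in $B$ fall under one of these). In each case I first push every vertex of $V(H)$ into $V(G) \setminus X$, then combine $\{\text{claw, diamond}\}$-freeness of $G-X$ with the maximal-clique property of $B$ in $G-X$ to conclude $V(H) \subseteq B$.

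The workhorse for the first step is the following claim: if $v \in X$ has at least two $G$-neighbors $p_1, p_2$ inside $B$, then since $B$ is unattached and $|B| \geq 2$ we have $v \notin A(B)$, so some $b \in B$ satisfies $vb \notin E(G)$, and $\{v, b, p_1, p_2\}$ induces a diamond with non-edge $vb$. Its five edges either live between $V(G) \setminus X$ vertices (the edges $p_1 p_2, bp_1, bp_2$) or cross to the single $X$-vertex $v$ (the edges $vp_1, vp_2$), so none lies in $E(X)$; hence this diamond is edge-disjoint from the packing, contradicting its maximality. In case~(i) any $v \in \{w_1, w_2\} \cap X$ is $H$-adjacent (hence $G$-adjacent) to both $u_1, u_2 \in B$, and the workhorse fires; the same holds symmetrically in~(iii), and in~(ii) for $v = u_2 \in X$.

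The one obstacle I foresee is case~(ii) with $v = w_2 \in X$, since $w_2$ is $H$-adjacent only to $u_1 \in B$ and not to $w_1 \in B$, so $w_2$ might have only one $G$-neighbor in $B$. I plan to dichotomize on whether $w_1 w_2 \in E(G)$: if yes, $w_2$ has two $G$-neighbors in $B$ and the workhorse applies; if no, then $G[V(H)] = H$ is itself an induced diamond in $G$, with $u_1, w_1 \in B \subseteq V(G) \setminus X$ and $u_2 \in V(G) \setminus X$ (by the subcase already handled) and only $w_2 \in X$. All five edges of $H$ avoid $E(X)$, so $H$ itself contradicts maximality of the packing. This resolves the last subcase and yields $V(H) \subseteq V(G) \setminus X$.

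With $V(H) \subseteq V(G) \setminus X$ in hand, $G[V(H)]$ sits inside the $\{\text{claw, diamond}\}$-free graph $G - X$ and contains the five edges of $H$; avoiding an induced diamond there forces $w_1 w_2 \in E(G)$, so $V(H)$ spans a $K_4$ in $G - X$. Since $B$ is a maximal clique of $G - X$ meeting this $K_4$ in at least two vertices, any $v \in V(H) \setminus B$ would be adjacent to every vertex of $V(H) \cap B$ but, by maximality of $B$, would miss some $b \in B$; then $\{v, b, p_1, p_2\}$ for any $p_1, p_2 \in V(H) \cap B$ induces a diamond entirely in $G - X$, contradicting diamond-freeness. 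Hence $V(H) \subseteq B$.
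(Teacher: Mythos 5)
Your proof is correct, and it takes a genuinely different route from the one sketched in the paper. The paper treats this as a cited result and justifies it only informally, via Figure~\ref{fig-diamond-in-a-bag}: if $B$ contained exactly $2$ or $3$ vertices of $H$, one of the conditions of Lemma~\ref{lem:bag-decomposition} (each edge in exactly one bag; no edges between $B_1\setminus\{v\}$ and $B_2\setminus\{v\}$) would be violated. That sketch implicitly assumes all of $V(H)$ already lies in $G-X$. You instead avoid Lemma~\ref{lem:bag-decomposition} entirely: you first use maximality of the packing to force $V(H)\subseteq V(G)\setminus X$ (your ``workhorse'' is essentially Lemma~\ref{lem-every-vertex-in-x-cannot-adjacenet-to-only-two-vertices-in-a-bag} specialized to an unattached bag, plus a separate argument for the tip $w_2$ adjacent to only one vertex of $B$ --- a case the paper's sketch does not address), then use diamond-freeness of $G-X$ to upgrade $V(H)$ to a $K_4$, and finally use maximality of the clique $B$ together with diamond-freeness to absorb any remaining vertex of $H$ into $B$. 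What your approach buys is self-containedness and explicit handling of the vertices of $H$ that might lie in $X$; what the paper's route buys is brevity, since once $V(H)\subseteq V(G)\setminus X$ is granted, the contradiction falls out of the already-established structural decomposition in one line. All the details I checked (distinctness of the four vertices in each constructed induced diamond, edge-disjointness from the packing because at most one vertex lies in $X$, and the reduction of the $\geq 3$-vertex cases to your three two-vertex configurations) are sound.
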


One can observe that if the bag~$B$ contains exactly two (or three) vertices of~$H$ in Lemma~\ref{lem:diamond_inside_bag}, then at least one of the conditions in Lemma~\ref{lem:bag-decomposition} is violated. See Figure~\ref{fig-diamond-in-a-bag} for an illustration.

\begin{figure}
\begin{center}
\includegraphics[width=0.85\textwidth]{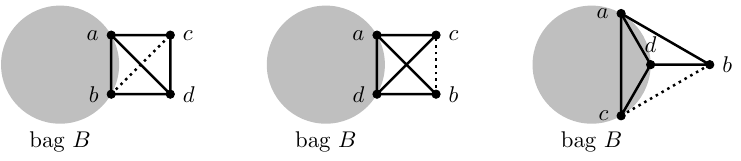}
\end{center}
\caption{Illustration of Lemma~\ref{lem:diamond_inside_bag}. Each case violates Lemma~\ref{lem:bag-decomposition}. For instance, in the first case, if the edges~$\edge{a}{d}$ and~$\edge{b}{d}$ are in the same bag, Condition~\ref{p:edg} is violated; otherwise, Condition~\ref{p:nonadj} is violated.}
\label{fig-diamond-in-a-bag}
\end{figure}

\begin{lemma}
\label{lem_claw_inside_bag}
Let~$H$ be a subgraph (not  necessarily  induced) of~$G$ isomorphic to a claw. 
Let $B\in\cB(G-X)$ be an unattached bag containing at least two leaves of~$H$.
Then~$B$ contains the center of~$H$.
\end{lemma}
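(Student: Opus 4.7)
My plan is to argue by contradiction using the bag decomposition of $G-X$ guaranteed by Lemma~\ref{lem:bag-decomposition}. Let $c$ denote the center of $H$, let $\ell_1,\ell_2$ be two leaves of $H$ lying in $B$, and suppose for contradiction that $c\notin B$. Since $c\in V(G-X)$ and $\ell_1,\ell_2\in B\subseteq V(G-X)$, the two claw edges $\edge{c}{\ell_1}$ and $\edge{c}{\ell_2}$ lie in $E(G-X)$; moreover $|B|\geq 2$, so $B$ is a maximal clique of $G-X$ rather than a simplified vertex, and hence $\edge{\ell_1}{\ell_2}\in E(G-X)$ as well. In particular $c,\ell_1,\ell_2$ are all non-isolated in $G-X$, which is exactly what is needed for Lemma~\ref{lem:bag-decomposition} to apply to them.

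Next, invoking condition~\ref{p:edg}, I let $B_1\in\cB(G-X)$ be the unique bag containing the edge $\edge{c}{\ell_1}$ and $B_2$ the unique bag containing $\edge{c}{\ell_2}$; so $\{c,\ell_1\}\subseteq B_1$ and $\{c,\ell_2\}\subseteq B_2$. Because $c\notin B$ by assumption, neither $B_1$ nor $B_2$ equals $B$. I would then split on whether $B_1=B_2$. If $B_1=B_2$, then $B\cap B_1\supseteq\{\ell_1,\ell_2\}$, contradicting condition~\ref{p:inters} that two distinct bags share at most one vertex. If $B_1\neq B_2$, then these are two distinct bags containing $c$, so by condition~\ref{p:ver} they are the only two bags containing $c$; condition~\ref{p:inters} forces $B_1\cap B_2=\{c\}$, giving $\ell_1\in B_1\setminus\{c\}$ and $\ell_2\in B_2\setminus\{c\}$, and then condition~\ref{p:nonadj} rules out the edge $\edge{\ell_1}{\ell_2}$, contradicting what was established in the previous paragraph.

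The whole argument is essentially a bookkeeping exercise on the four structural properties of Lemma~\ref{lem:bag-decomposition}, and I do not anticipate a serious obstacle; the one point of care is to confirm that the lemma genuinely applies to $c,\ell_1,\ell_2$, which it does because the claw edges and the clique structure of $B$ together ensure these three vertices are non-isolated in $G-X$. It is worth noting that the proof uses neither the third leaf $\ell_3$ nor the hypothesis that $B$ is unattached — the unattached-ness is there for downstream callers that need to control interactions with $X$, but plays no role inside $G-X$ itself. This parallels the short bag-decomposition argument used for the analogous Lemma~\ref{lem:diamond_inside_bag}.
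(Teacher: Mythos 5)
Your proof is correct and takes essentially the same route as the paper's: both identify the bag(s) of $\cB(G-X)$ containing the edges $(c,\ell_1)$ and $(c,\ell_2)$ and derive a contradiction from Conditions~\ref{p:edg}--\ref{p:nonadj} of Lemma~\ref{lem:bag-decomposition}, with your explicit case split (same bag vs.\ different bags) merely unpacking what the paper compresses into two sentences. Your side remarks --- that $|B|\geq 2$ forces $B$ to be a maximal clique so that $\ell_1\ell_2$ is an edge, and that the unattachedness of $B$ is not actually used --- are accurate and consistent with the paper's argument.
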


\begin{proof}
Let~$c$ be the center of~$H$, and let~$\ell_1$ and~$\ell_2$ be two of the leaves of~$H$ included in~$B$. As~$B$ is a bag, there is an edge between~$\ell_1$ and~$\ell_2$ in~$G$.
For the sake of contradiction, assume that $c\not\in B$. We distinguish between two cases.
\begin{description}
\item[Case~1:] $c\in X$. \hfill

As~$B$ is unattached, there must be another vertex $u\not\in \{\ell_1, \ell_2\}$ in the bag~$B$ which is not adjacent to~$c$. However,~$c$,~$\ell_1$,~$\ell_2$, and~$u$ form an induced diamond which is edge-disjoint with all claws and diamonds in~$G[X]$, contradicting with the definition of~$X$.

\item[Case~2:] $c\in V(G)\setminus X$.

Let~$B'$ be the bag including the edge~$\edge{c}{\ell_1}$. Due to Condition~\ref{p:nonadj} of Lemma~\ref{lem:bag-decomposition},~$\edge{c}{\ell_2}$ must be also in~$B'$, which further implies that the edge $\edge{\ell_1}{\ell_2}$ is also contained in~$B'$. As we assumed that~$c$ is not in~$B$, we know that~$B$ and~$B'$ are distinct. This contradicts with Condition~\ref{p:edg} of
Lemma~\ref{lem:bag-decomposition} because~$B$ and~$B'$ share at least two common vertices.
\end{description}
As both cases lead to some contradiction, we know that $c\in B$.
\end{proof}

\begin{lemma}[\cite{DBLP:journals/mst/CyganPPLW17}]
\label{lem_untached_bag_cliques_and_isolated_vertices}
Let~$B$ be an unattached bag in $\cB(G-X)$ and let~$S$ be a minimal CDH set of~$G$. Then, $G[B]-S$ consists of a clique and~$i$ isolated vertices for some nonnegative integer~$i$.
\end{lemma}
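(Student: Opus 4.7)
Set $G^\ast := G - S$, which is \{claw, diamond\}-free, and apply Lemma~\ref{lem:bag-decomposition} to $G^\ast$ (after discarding its isolated vertices). The plan is to show that all non-isolated vertices of $G^\ast[B]$ lie in a single bag of $G^\ast$; since any bag is a clique, this immediately yields the desired decomposition of $G^\ast[B]$ into one clique together with some isolated vertices.

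Suppose for contradiction that there are two non-isolated vertices $u, v \in B$ of $G^\ast[B]$ that share no bag of $G^\ast$. Since $B$ is a clique in $G$, $(u,v) \in E(G)$, and as $u$ and $v$ share no $G^\ast$-bag, we must have $(u,v) \in S$. By minimality of $S$, there is an induced claw or diamond $H$ in $G^\ast + \{(u,v)\}$ using the edge $(u,v)$. If $H$ is a diamond, Lemma~\ref{lem:diamond_inside_bag} places all four vertices of $H$ in $B$, so they span a $K_4$ in $G$; a bag-tracking argument applied to $G^\ast$ using Conditions~\ref{p:ver} and \ref{p:nonadj} of Lemma~\ref{lem:bag-decomposition}, on the four wing-diagonal edges of $H$ present in $G^\ast$, forces $u$ and $v$ into a common $G^\ast$-bag, a contradiction. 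If $H$ is a claw, its center is incident to $(u,v)$, so WLOG the center is $u$ and $v$ is a leaf; let $\ell_2, \ell_3$ be the other two leaves. When at least two of $v, \ell_2, \ell_3$ lie in $B$, Lemma~\ref{lem_claw_inside_bag} places all three in $B$, and since $G[B]$ is a clique the three non-edges of $H$ must all lie in $S$; a bag-tracking argument analogous to the diamond case again yields a contradiction. When only $v$ lies in $B$ and $\ell_2, \ell_3 \notin B$, we invoke the unattached-ness of $B$: combined with Conditions~\ref{p:ver} and \ref{p:nonadj} of Lemma~\ref{lem:bag-decomposition} applied to $G-X$, which forbid $G$-edges between $B\setminus\{u\}$ and $u$'s other $G-X$-bag, this constrains the positions of $\ell_2, \ell_3$ inside $u$'s $G^\ast$-bags tightly enough to force $(\ell_2,\ell_3) \in G^\ast$, contradicting the induced-ness of $H$ in $G^\ast + \{(u,v)\}$.

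The technical heart of the argument lies in the final sub-case, where only one leaf of the claw sits inside $B$: one must juggle two distinct bag decompositions, of $G^\ast$ and of $G-X$, and exploit the unattached-ness of $B$ to rule out the remaining possibilities for $\ell_2, \ell_3$ (whether they live in $X$ or in another $G-X$-bag). The diamond case and the ``all leaves in $B$'' sub-case of the claw case reduce essentially to bookkeeping with the bag axioms of Lemma~\ref{lem:bag-decomposition}.
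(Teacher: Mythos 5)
Your setup is fine up to the point where $(u,v)\in S$ and minimality of $S$ yields a forbidden structure $H$ in $(G-S)+\{(u,v)\}$ through $(u,v)$; the gap is that the local ``bag-tracking'' contradiction you then claim does not exist. Take the diamond case with $(u,v)$ the diagonal of $H$: writing $H$ on $\{u,b,v,d\}$ with missing edge $(b,d)$, Lemma~\ref{lem:diamond_inside_bag} puts all four vertices in $B$, so $(b,d)\in S$ as well, and the four wing edges $(u,b),(b,v),(v,d),(d,u)$ survive in $G-S$ as a \emph{chordless $4$-cycle}. A chordless $4$-cycle is itself \{claw, diamond\}-free, and its bag decomposition simply assigns each of the four edges to its own bag; Conditions~\ref{p:ver}--\ref{p:nonadj} of Lemma~\ref{lem:bag-decomposition} are all satisfied without $u$ and $v$ ever sharing a bag of $G-S$. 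Even edge-by-edge minimality is consistent with this picture, since adding back either chord individually recreates a diamond on $\{u,b,v,d\}$, so both $(u,v)$ and $(b,d)$ are ``locally necessary''. The same problem recurs when $(u,v)$ is a wing edge (the four surviving edges form a paw, again locally consistent with the bag axioms), and the claw sub-case you yourself flag as the technical heart is not actually carried out. (Minor additional point: Lemma~\ref{lem_claw_inside_bag} places only the \emph{center} in $B$, not all three leaves as you assert.)

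The $C_4$ configuration above can only be excluded by a global exchange argument, which is exactly the route the paper (following Cygan et al., Lemma 3.6) takes: if the non-isolated part of $G[B]-S$ is not a clique, then $S\cap E(B)\neq\emptyset$, and one shows that $S\setminus E(B)$ --- deleting from $S$ \emph{all} of its edges inside $B$ at once, so that $B$ reverts to a full clique --- is still a CDH set, using Lemmas~\ref{lem:diamond_inside_bag} and~\ref{lem_claw_inside_bag} to argue that no forbidden structure with two or more vertices in $B$ can survive. This contradicts the minimality of $S$. Single-edge minimality combined with the bag axioms of $G-S$, as in your proposal, is not sufficient; the argument needs to be rebuilt around this all-at-once exchange step.
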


Let~$B'$ be the set of nonisolated vertices in $G[B]-S$, where~$B$ is as stipulated in Lemma~\ref{lem_untached_bag_cliques_and_isolated_vertices}. In fact, if~$B'$ is not a clique in $G[B]-S$, one can show that $S\setminus E(B)$ is a smaller CDH set of~$G$. We refer the formal proof of Lemma~\ref{lem_untached_bag_cliques_and_isolated_vertices} to~\cite[Lemma~3.6]{DBLP:journals/mst/CyganPPLW17}.

The next property says that for every bag $B\in \cB(G-X)$ and every vertex $x\in X$, $|N(x)\cap B|\in \{0, 1, |B|\}$ holds. In fact, if~$v$ is adjacent to more than one vertex of~$B$ but not all of them, then there is an induced diamond (formed by~$v$, two of~$v$'s neighbors in~$B$, and one vertex in~$B$ which is not adjacent to~$v$)  which is edge-disjoint from all induced claws and diamonds in~$G[X]$, a contradiction. The following lemma summarizes this property.

\begin{lemma}[\cite{DBLP:journals/mst/CyganPPLW17}]
\label{lem-every-vertex-in-x-cannot-adjacenet-to-only-two-vertices-in-a-bag}
If a vertex $x\in X$ is adjacent to two vertices in a bag $B\in \cB(G-X)$, then~$B$ is attached to~$x$.
\end{lemma}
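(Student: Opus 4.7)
The plan is to argue by contradiction using the maximality of the packing that defines $X$. Suppose $x\in X$ is adjacent to (at least) two vertices $u,v\in B$. If $|B|=2$, then the conclusion follows directly from condition~1 of the definition of an attached bag, so I will assume $|B|\geq 3$ and aim to show $x$ is adjacent to \emph{every} vertex of $B$.

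Suppose for contradiction that there exists $w\in B$ with $w\notin N(x)$. Since $|B|\geq 3$, the bag $B$ is not a simplified vertex, so by definition it is a maximal clique; in particular $\{u,v,w\}$ is a triangle in $G-X$. The plan then is to consider the induced subgraph $H := G[\{x,u,v,w\}]$, which contains precisely the five edges $\edge{x}{u}$, $\edge{x}{v}$, $\edge{u}{v}$, $\edge{u}{w}$, $\edge{v}{w}$ and misses $\edge{x}{w}$, so $H$ is an induced diamond of $G$.

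The crucial step is to verify that $H$ is edge-disjoint from every claw and diamond appearing in the maximal packing used to define $X$. By construction every vertex of every claw/diamond in the packing lies in $X$, so every edge of the packing has both endpoints in $X$. However, the three edges of $H$ contained in $B$ lie entirely in $V(G-X)$, while the two edges $\edge{x}{u}$ and $\edge{x}{v}$ each have exactly one endpoint in $X$; none of the five edges of $H$ has both endpoints in $X$. Hence $H$ could be added to the packing, contradicting its maximality. This contradiction forces $N(x)\supseteq B$, and hence $B$ is attached to $x$ via condition~1.

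I do not foresee a serious obstacle: the only place where one must be a little careful is to justify that $B$ is a maximal clique (so that the triangle $uvw$ is present) rather than a single-vertex simplified bag, which is immediate from $|B|\geq 3$, and to cleanly state the edge-disjointness argument in terms of the partition of vertices between $X$ and $V(G)\setminus X$.
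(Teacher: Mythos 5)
Your proof is correct and follows essentially the same route as the paper: exhibit the induced diamond on $x$, two of its neighbours in $B$, and a non-neighbour $w\in B$, and observe that it is edge-disjoint from every member of the packing (whose edges all lie inside $X$), contradicting maximality. Your write-up is in fact slightly more careful than the paper's sketch, explicitly handling $|B|=2$ and justifying that $B$ is a clique when $|B|\geq 3$.
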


Next, we study a property regarding outlier bags.

\begin{lemma}
\label{lem-outlier-bags-donot-connect-to-modulator}
Let $B\in \cB(G-X)$ be an outlier bag. Then, none of the vertices in~$B$ is adjacent to any vertex in~$X$.
\end{lemma}

\begin{proof}
Assume, for the sake of contradiction, that~$B$ contains a vertex~$v$ who has a neighbor~$x\in X$.
If~$v$ is a simplicial vertex in $G-X$, then either~$\{v\}$ is attached to~$v$, or~$B$ is attached to~$v$ (Lemma~\ref{lem-every-vertex-in-x-cannot-adjacenet-to-only-two-vertices-in-a-bag}),  contradicting that~$B$ is an outlier bag. If~$v$ is not a simiplicial vertex, then $\abs{B}>1$ and~$v$ is contained in another bag $B'\in \cB(G-X)$ such that $\abs{B'}>1$. Let $u$ be a vertex in $B\setminus \{v\}$, and let~$w$ be a vertex in $B'\setminus \{v\}$. Due to Condition~\ref{p:nonadj} of Lemma~\ref{lem:bag-decomposition},~$u$ is not adjacent to~$w$ in~$G$. Due to Lemma~\ref{lem-every-vertex-in-x-cannot-adjacenet-to-only-two-vertices-in-a-bag},~$u$ is not adjacent to~$x$. By the same lemma,~$w$ is not adjacent to~$x$, since otherwise~$B'$ is attached and hence~$B$ cannot be an outlier bag. However, this means that~$v$,~$u$,~$w$, and~$x$ form an induced claw in~$G$. As only one vertex (i.e., $x$) of the induced claw is in~$X$, this contradicts that~$X$ is a modulator.
\end{proof}

We are ready to prove the soundness of Rules~\ref{rule_remove_edges_in_outlier_bags} and \ref{rule_deleting_edges_in_irrevalent_bags}. These two rules share some common principle and hence can be proved in a similar manner.

\begin{lemma}
\label{deletevertex}
Rules~\ref{rule_remove_edges_in_outlier_bags} and \ref{rule_deleting_edges_in_irrevalent_bags} are sound.
\end{lemma}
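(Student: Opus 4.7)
The plan is to prove both directions: $(G,k)$ is a \yesins{} iff $(G',k)$ is a \yesins{}, where $G':=G-E(B)$ and $B$ is the bag whose internal edges are deleted by the rule. The key structural step, as outlined in the remark preceding the lemma, is that for any minimum CDH $S$ of $G$ with $|S|\le k$ and each bag $B^*\in\cB(G-X)$ sharing a vertex with $B$, the induced graph $G[B^*]-S$ decomposes as a clique $B^*_c$ plus isolated vertices. For Rule~\ref{rule_remove_edges_in_outlier_bags}, every such $B^*$ is unattached (since $B$ is an outlier), so Lemma~\ref{lem_untached_bag_cliques_and_isolated_vertices} applies directly; for Rule~\ref{rule_deleting_edges_in_irrevalent_bags}, unattached neighbors again use that lemma, while attached neighbors are big, and Lemma~\ref{lem:large_bag_disjiont_with_solution} then gives $S\cap E(B^*)=\emptyset$, so $G[B^*]-S=G[B^*]$ is a full clique.

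For the forward direction, I would take $S$ a minimum CDH of $G$ with $|S|\le k$ and set $S_1:=S\setminus E(B)$; by Lemma~\ref{lem_untached_bag_cliques_and_isolated_vertices}, $G[B]-S$ is a clique $B_c$ plus isolated vertices, so $G'-S_1=G-S-E(B_c)$. Suppose for contradiction that an induced claw or diamond $H$ exists in $G'-S_1$; then the induced subgraph on $V(H)$ in $G-S$ has extra edges lying in $E(B_c)$. For a claw with center $c$ and leaves $\ell_1,\ell_2,\ell_3$, those extras must be non-edges of $H$, so without loss of generality $\ell_1,\ell_2\in B_c\subseteq B$. I rule out $c\in X$ via Lemma~\ref{lem-every-vertex-in-x-cannot-adjacenet-to-only-two-vertices-in-a-bag} together with $B$ being unattached, and $c\in B$ via Lemma~\ref{lem_untached_bag_cliques_and_isolated_vertices} (either placement of $c$ inside $B$ contradicts $c$ being adjacent to $\ell_1,\ell_2$ in $G-S$ while the claw edges $c\ell_1,c\ell_2$ survive in $G'-S_1$). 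Hence $c$ lies in two bags $B^*,B^{**}\in\cB(G-X)$ containing $\ell_1,\ell_2$ respectively, and Condition~\ref{p:inters} of Lemma~\ref{lem:bag-decomposition} forces $B^*\neq B^{**}$. The third leaf $\ell_3$, adjacent to $c$ in $G-S$, lies in $B^*\cup B^{**}$; applying the key structural step to that bag places $\ell_3$ into its clique part, so $\ell_3$ becomes adjacent to $\ell_1$ (or $\ell_2$) in $G-S$, contradicting the claw. The diamond case is analogous: the unique non-edge $ab$ of the diamond lies in $B_c$, and the remaining two vertices are both forced into $B^*\cap B^{**}$, which by Condition~\ref{p:inters} has size at most one.

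For the reverse direction, I would take $S'$ a minimum CDH of $G'$ with $|S'|\le k$ and show that $G-S'=(G'-S')\cup E(B)$ is \{claw, diamond\}-free, so $S'$ is a CDH of $G$ of size at most $k$. Assuming for contradiction an induced claw or diamond $H$ in $G-S'$ using at least one edge of $E(B)$, a parallel case analysis, exploiting this time that $G'-S'$ is \{claw, diamond\}-free, yields a contradiction. The principal subcase is a claw with $c,\ell_1\in B$ (so $c\ell_1\in E(B)$) and $\ell_2,\ell_3\notin B$: the leaves $\ell_2,\ell_3$ either both lie in the other bag $B^*$ of $c$ (in which case $\ell_2\ell_3\in S'$ is forced, and if $|B^*|\ge 4$, any fourth vertex $w\in B^*$ together with $c,\ell_2,\ell_3$ induces a diamond in $G'-S'$) or at least one lies in $X$ (in which case Lemma~\ref{lem-every-vertex-in-x-cannot-adjacenet-to-only-two-vertices-in-a-bag} combined with $B$ being unattached forces this $X$-vertex to be adjacent to only $c$ among $B$'s vertices, so $\{c,\ell_1,\ell_2,\ell_3\}$ induces a claw in $G$ whose three edges all touch $c$ or $\ell_1\in V(G)\setminus X$ and thus is edge-disjoint from the packing underlying $X$, contradicting its maximality). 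The hardest part will be closing the reverse direction's small-$|B^*|$ subcase, since the direct diamond-in-$G'-S'$ argument does not apply when $|B^*|\le 3$; this requires a finer combination of Lemma~\ref{lem-every-vertex-in-x-cannot-adjacenet-to-only-two-vertices-in-a-bag}, Condition~\ref{p:inters} of Lemma~\ref{lem:bag-decomposition}, and the maximality of $X$, together with the distinction between Rule~\ref{rule_remove_edges_in_outlier_bags} (outlier $B$) and Rule~\ref{rule_deleting_edges_in_irrevalent_bags} (border $B$ with only big attached neighbors), to close all remaining configurations.
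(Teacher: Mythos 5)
Your overall plan coincides with the paper's (prove both directions, using Lemmas~\ref{lem:large_bag_disjiont_with_solution}, \ref{lem:diamond_inside_bag}, \ref{lem_untached_bag_cliques_and_isolated_vertices} and~\ref{lem-every-vertex-in-x-cannot-adjacenet-to-only-two-vertices-in-a-bag} as the structural backbone), and your forward direction is essentially correct. One small slip there: your final contradiction routes through $\ell_3$, which may lie in $X$, in which case ``$\ell_3$ lies in $B^*\cup B^{**}$'' fails. The contradiction is already available without $\ell_3$: since $\ell_1,\ell_2\in B$ and $|B|\ge 2$, the edge $\edge{\ell_1}{\ell_2}$ exists in $G-X$ and joins $B^*\setminus\{c\}$ to $B^{**}\setminus\{c\}$, violating Condition~\ref{p:nonadj} of Lemma~\ref{lem:bag-decomposition} --- this is exactly Lemma~\ref{lem_claw_inside_bag}, which you could cite directly.

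The genuine gaps are in the reverse direction. (i) In the subcase $\ell_2,\ell_3\in B^*$, your diamond argument is not valid even for $|B^*|\ge 4$ (nothing prevents $S'$ from deleting edges at the fourth vertex $w$), and you explicitly leave $|B^*|\le 3$ open. The paper closes this subcase uniformly, with no size threshold: $B^*$ is either unattached or, under Rule~\ref{rule_deleting_edges_in_irrevalent_bags}, a big attached bag, so by Lemma~\ref{lem_untached_bag_cliques_and_isolated_vertices} or Lemma~\ref{lem:large_bag_disjiont_with_solution} applied to $G'$ and $S'$ ($X$ is still a modular of $G'$), $B^*$ induces a clique plus isolated vertices in $G'-S'$; as $c,\ell_2,\ell_3$ are kept non-isolated by the surviving edges $\edge{c}{\ell_2}$ and $\edge{c}{\ell_3}$, all three fall into the clique part, so $\edge{\ell_2}{\ell_3}$ survives, contradicting that it is a non-edge of the claw. (ii) In the subcase where some $\ell_i\in X$, your claim that $\{c,\ell_1,\ell_2,\ell_3\}$ induces a claw in $G$ is unjustified: $\edge{\ell_2}{\ell_3}$ may be an edge of $G$ that lies in $S'$ (for instance $\ell_2\in X$ adjacent to $\ell_3\in B^*$, or both leaves in $X$ and adjacent inside $G[X]$), and then these four vertices induce a paw in $G$, giving no conflict with the maximality of the packing. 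This is precisely where the paper needs its longest argument: it shows the other bag $B'$ of $c$ must be attached to $\ell_i$ (otherwise $c$, $\ell_1$, $\ell_i$ and a non-neighbour of $\ell_i$ in $B'$ do give an induced claw edge-disjoint from the packing), hence $B'$ is a \emph{big} attached bag by the precondition of Rule~\ref{rule_deleting_edges_in_irrevalent_bags}, and Lemma~\ref{lem:large_bag_disjiont_with_solution} then forces $S'$ to avoid $E(B'\cup\{\ell_2,\ell_3\})$, which is the contradiction. (For Rule~\ref{rule_remove_edges_in_outlier_bags} this subcase is vacuous: by Lemma~\ref{lem-only-one-bag-including-a-vertex-is-attached} a neighbour of $c$ in $X$ would make the other bag of $c$ attached, turning $B$ into a border bag.) Without these two repairs the reverse direction --- which is exactly where the ``no small attached bag'' precondition of Rule~\ref{rule_deleting_edges_in_irrevalent_bags} is used --- does not go through.
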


\begin{proof}
Let~$B$ be a bag as stipulated in Rule~\ref{rule_remove_edges_in_outlier_bags} (resp.\ Rule~\ref{rule_deleting_edges_in_irrevalent_bags}).
We show that~$(G, k)$ is a {\yesins} if and only if~$(G', k)$ is a {\yesins}. Let~$F$ be the set of all bags in $G-X$ of size at least two that share some vertex with~$B$. We distinguish between two cases: $F=\emptyset$ and $F\neq \emptyset$.

First, if $F=\emptyset$, then~$B$ is a clique and a connected component of~$G$, and it is clear that a clique does not contain any induced claws and diamonds. So, we can safely remove~$B$ from the graph~$G$. In the reduction rules, after removing edges in~$E(B)$, vertices in~$B$ become isolated vertices and are further removed by Rule~\ref{rule_remove_components}.

Now we consider the second case where $F\neq\emptyset$.
Let~$S$ be a minimal CDH set of~$G$ of size at most~$k$. Due to the definition of~$B$, each bag in~$F$ is unattached (resp.\ either a big attached bag or an unattached bag). Hence, Due to Lemmas~\ref{lem:large_bag_disjiont_with_solution} and~\ref{lem_untached_bag_cliques_and_isolated_vertices}, every bag in $\{B\}\cup F$ induces a graph consisting of a clique and (possibly) some isolated vertices in $G-S$. We show that $S\setminus E(B)$ is a CDH set of~$G'$. To this end, we need only to show that $G-S-E(B)$ is \{claw, diamond\}-free. For the sake of contradiction, assume that there is an induced claw or diamond~{\fst} in $G-S-E(B)$. Observe that~$B$ contains at least two vertices of~{\fst}, since otherwise~{\fst} exists in $G-S$, contradicting that~$S$ is a CDH set of~$G$. Moreover, the vertices of~{\fst} in~$B$ must form an independent set of~{\fst}. Consider first the case where~{\fst} is an induced claw. Due to the above discussion, the center of {\fst} cannot be in~$B$.
However, this contradicts with Lemma~\ref{lem_claw_inside_bag}.
Hence,~{\fst} cannot be an induced claw. Now, we consider the case where~{\fst} is an induced diamond. As discussed above,~$B$ contains at least two vertices of~{\fst}. Then, due to Lemma~\ref{lem:diamond_inside_bag}, all vertices of~{\fst} are in~$B$. However, this cannot be the case as~{\fst} contains edges but the vertices in~$B$ induce an independent set in $G-S-E(B)$. This completes the proof for this direction.

It remains to prove the other direction.
Let~$S'$ be a minimal CDH set of~$G'$ of size at most~$k$. We show that~$S'$ is also a CDH set of~$G$. To this end, we need only to show that~$G'-S'+E(B)=G-S'$ is $\{\text{claw, diamond}\}$-free. For the sake of contradiction, assume that there is an induced claw or diamond~{\fst} in $G-S'$. Clearly,~$B$ contains at least two vertices of~{\fst}, since otherwise~{\fst} exists in $G'-S'$, a contradiction.

Consider first the case where~{\fst} is an induced claw. Obviously,~$B$ can contain at most one leaf of~{\fst}. Hence,~$B$ contains the center~$c$ and a leaf~$\ell$ of~{\fst}. Let~$\ell_1$ and~$\ell_2$ be the other two leaves of {\fst}. We claim that no matter whether~$B$ is an outlier bag (in Rule~\ref{rule_remove_edges_in_outlier_bags}) or a border bag (in Rule~\ref{rule_deleting_edges_in_irrevalent_bags}),~$\ell_1$ and~$\ell_2$ are both in $G-X$. This is true for the former case due to Lemma~\ref{lem-outlier-bags-donot-connect-to-modulator}.
Now we consider the latter case. For the sake of contradiction, assume that~$\ell_i$ for some $i\in \{1,2\}$ is in~$X$. Let~$B'$ be the other bag including~$c$. If $|B'|=1$, then~$B'$ is attached to~$\ell_i$ in~$G$, contradicting that~$B$ is a border bag without common vertices with small attached bags, as stipulated in Rule~\ref{rule_deleting_edges_in_irrevalent_bags}. Hence, let us assume that $|B'|>1$. Then,~$B'$ must be attached to~$\ell_i$ in~$G$, since otherwise~$\ell$,~$\ell_i$,~$c$, and any vertex in~$B'$ which is not adjacent to~$\ell_i$ is an induced claw in~$G$ which is edge-disjoint with any induced claws and diamonds in~$G[X]$, a contradiction. This means that~$B'$ is a big bag. We continue the proof of the claim by considering the location of~$\ell_{3-i}$. If $\ell_{3-i}\in X$, then by replacing occurrences of~$i$ with~$3-i$ in the above argument, we can conclude that~$B'$ is adjacent to~$\ell_{3-i}$ in~$G$ too. Then, $B'\cup \{\ell_1,\ell_2\}$ must be a clique in~$G$, since otherwise there is an induced diamond (formed by~$\ell_1$,~$\ell_2$, and any two vertices in~$B'$) which is edge-disjoint with all induced claws and diamonds in~$G[X]$, a contradiction. Due to the definition of~$G'$,~$B'$ is also a clique in~$G'$. Then, according to Lemma~\ref{lem:large_bag_disjiont_with_solution},~$S'$ is disjoint from all edges in $E(B'\cup \{\ell_1,\ell_2\})$, which contradicts that~$\ell_1$ and~$\ell_2$ are two leaves of~{\fst} in $G-S'$. Assume now that~$\ell_{3-i}$ is in $G-X$. Then, due to Conditions~\ref{p:ver} and~\ref{p:edg} of Lemma~\ref{lem:bag-decomposition}, the edge $\edge{c}{\ell_{3-i}}$ must be included in the bag~$B'$, implying that $B'\cup \{\ell_1,\ell_2\}$ is a clique in~$G$ (and~$G'$). However, this contradicts that~$S'$ is disjoint from $E(B'\cup \{\ell_1,\ell_2\})$. This completes the proof for the claim that both~$\ell_1$ and~$\ell_2$ are in $G-X$. Then, due to this claim and  Conditions~\ref{p:ver} and~\ref{p:edg} of  Lemma~\ref{lem:bag-decomposition},~$\ell_1$ and~$\ell_2$ are both in~$B'$.
Notice that~$X$ is also a modulator of~$G'$.  
Then, if~$B'$ is unattached, due to Lemma~\ref{lem_untached_bag_cliques_and_isolated_vertices},~$G'[B']-S'$ consists of a clique and (possibly) several isolated vertices. However,~$c$ is adjacent to both~$\ell_1$ and~$\ell_2$, but~$\ell_1$ and~$\ell_2$ are not adjacent in~$G-S'$, a contradiction. 
If~$B'$ is attached, then~$B$ cannot be an outlier bag. So, in this case, we analysis only for Rule~\ref{rule_deleting_edges_in_irrevalent_bags}. As stipulated in this rule, we know that~$B'$ is a big attached bag. Then, from Lemma~\ref{lem:large_bag_disjiont_with_solution},~$S'$ does not contain any edge in~$G[B']$, and hence~$B'$ is still a clique in $G-S'$. However, this contradicts that~$\ell_1$ and~$\ell_2$ are two leaves in an induced claw in $G-S'$.


Consider now that {\fst} is an induced diamond in $G-S'$. Due to Lemma~\ref{lem:diamond_inside_bag}, all vertices of~{\fst} are in~$B$. However,~$B$ induces a clique in $G-S'$, a contradiction too.
\end{proof}

The next reduction rule reduces the size of attached bags.

\begin{myrule}
\label{rule_remove_one_vertex_from_a_big_bag}
If there is an attached bag~$B$ which is of size at least~$\nbg$ and shares a vertex~$v$ with a border bag~$B'$, then delete all edges incident to~$v$ in $E(B\cup A(B))$.
\end{myrule}

We refer to Figure~\ref{fig-lem-12-aa} for an illustration of Rule~\ref{rule_remove_one_vertex_from_a_big_bag}.
\begin{figure}
\centering
{
\includegraphics[width=0.85\textwidth]{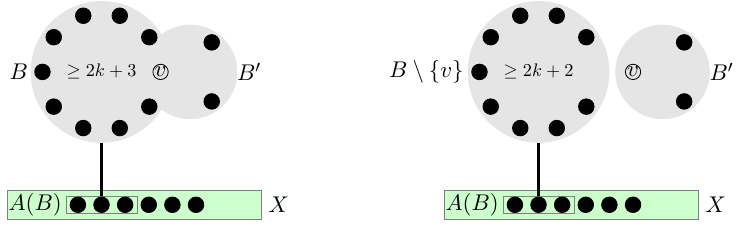}
}
\caption{An illustration of  Rule~\ref{rule_remove_one_vertex_from_a_big_bag}.
}
\label{fig-lem-12-aa}
\end{figure}
A special case of Rule~\ref{rule_remove_one_vertex_from_a_big_bag} is that when~$v$ is a simplicial vertex in $G-X$. In this case, after the application of Rule~\ref{rule_remove_one_vertex_from_a_big_bag},~$v$ becomes an isolated vertex. Then, an application of Rule~\ref{rule_remove_components} deletes~$v$ from~$G$.

Now we prove the soundness of Rule \ref{rule_remove_one_vertex_from_a_big_bag}. We claim that in Rule~\ref{rule_remove_one_vertex_from_a_big_bag} it holds that $N(v)\cap X=A(B)$. For the sake of contradiction, assume that there is an $x\in X\setminus A(B)$ such that $\edge{v}{x}\in E(G)$. If~$v$ forms a bag itself, then due to Condition~(1) of Lemma~\ref{lem:bag-decomposition},~$B$ and~$B'$ are the only two bags including~$v$ and, moreover,  $B'=\{v\}$. This implies that~$B'$ is attached to~$x$, contradicting that~$B'$ is a border bag. If, however,~$B'\setminus \{v\}\neq \emptyset$, then~$v$,~$x$, any vertex from~$B$, and any vertex from~$B'\setminus \{v\}$ induce a claw, which is edge-disjoint from all induced claws and diamonds in~$G[X]$, a contradiction.

\begin{lemma}
\label{lem_rule_remove_one_vertex_from_a_big_bag_is_sound}
Rule~\ref{rule_remove_one_vertex_from_a_big_bag} is sound.
\end{lemma}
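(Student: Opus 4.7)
The plan is to prove that $(G,k)$ is a \yesins\ if and only if $(G',k)$ is. The starting observation is that Lemma~\ref{lem:large_bag_disjiont_with_solution} applies on both sides: since $|B|\geq \nbg$, any CDH set $S$ of $G$ with $|S|\leq k$ satisfies $S\cap E(B\cup A(B))=\emptyset$, and in particular $S\cap F=\emptyset$. Symmetrically, $B\setminus\{v\}$ remains a bag of $\cB(G'-X)$ of size $\geq\nsmbg$: it is still a maximal clique (no vertex of $G-X$ outside $B$ can be adjacent in $G-X$ to two or more vertices of $B$, else there would be an induced diamond in $G-X$), and $v$ is no longer adjacent to $B\setminus\{v\}$ in $G'-X$. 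Hence any CDH set $S'$ of $G'$ with $|S'|\leq k$ satisfies $S'\cap E((B\setminus\{v\})\cup A(B))=\emptyset$ and $S'\cap F=\emptyset$. Consequently $G'-S=G-S-F$ in the forward direction, and $B\cup A(B)$ is a clique of size $\geq\nbg$ in $G-S'$ in the backward direction.

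For the forward direction, I would suppose for contradiction that $G'-S$ contains an induced claw or diamond $H$. Since $G-S$ is $\{\text{claw, diamond}\}$-free, some non-edge pair of $H$ must be an edge of $G-S$, which can only be an edge of $F$, so $v\in V(H)$. Setting $C_1=(B\cup A(B))\setminus\{v\}$ (a clique of size $\geq\nsmbg$) and $C_2=N_{G-S}(v)\cap B'$ (a clique by Lemma~\ref{lem_untached_bag_cliques_and_isolated_vertices} applied to the unattached bag $B'$), the key structural fact is that $N_{G-S}(v)=C_1\cup C_2$ is the disjoint union of two cliques with no edges between them; absence of cross-edges combines Condition~\ref{p:nonadj} of Lemma~\ref{lem:bag-decomposition} with the packing maximality (any edge between $A(B)$ and $B'\setminus\{v\}$ would, together with $v$ and any vertex of $B\setminus\{v\}$, induce a diamond edge-disjoint from $G[X]$). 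Then $v$ can be neither the center of a claw nor a degree-$3$ vertex of a diamond, since both roles require non-adjacent neighbors of $v$ and $N_{G'-S}(v)=C_2$ is a clique. If $v$ is a claw leaf with center $c\in C_2$, subcasing on how many of the other leaves $\ell_2,\ell_3$ lie in $C_1$: both in $C_1$ makes $(\ell_2,\ell_3)$ an edge in $G'-S$ (from $C_1$'s clique), contradicting the claw structure; neither in $C_1$ means $H$ is already induced in $G-S$; exactly one (say $\ell_2\in C_1$) makes $\{v,c,c_1',\ell_2\}$ an induced diamond in $G-S$ for any $c_1'\in C_1\setminus\{\ell_2\}$ (such $c_1'$ exists because $|C_1|\geq 2$); each subcase contradicts that $S$ is a CDH set. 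If $v$ is a degree-$2$ vertex of a diamond with degree-$3$ neighbors $c,d\in C_2$, the opposite vertex $b$ is ruled out: $b\in B'$ would place $b$ in the clique containing $v$ (hence adjacent to $v$); $b\in X$ would attach to $B'$ by Lemma~\ref{lem-every-vertex-in-x-cannot-adjacenet-to-only-two-vertices-in-a-bag}, contradicting $B'$ being a border bag; and $b\in V(G-X)\setminus B'$ either yields an induced diamond in $G-X$ edge-disjoint from the packing, or forces $b\in B'$ by the maximality of $B'$ as a clique in $G-X$.

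For the backward direction, the contradiction must be extracted inside $G'-S'$, which is assumed $\{\text{claw, diamond}\}$-free. Suppose $G-S'$ contains an induced $H$; the same reasoning forces $v\in V(H)$, and $N_{G-S'}(v)=C_1\cup C_2'$ (with $C_2'=N_{G-S'}(v)\cap B'$) admits the same two-clique decomposition. The $v$-as-center and $v$-degree-$3$ cases are again immediate. The remaining subcases exploit $|C_1|\geq\nsmbg$ via a probing argument. For a claw with $v$ as leaf and center $c\in C_1$, I would examine $\{c,c_1',\ell_2,\ell_3\}$ in $G'-S'$ for each $c_1'\in C_1\setminus\{c\}$: if neither $(c_1',\ell_2)$ nor $(c_1',\ell_3)$ is an edge, this set is an induced claw in $G'-S'$; if both are edges, an induced diamond; either contradicts $S'$ being a CDH of $G'$. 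Hence each $c_1'$ is adjacent to exactly one of $\ell_2,\ell_3$ in $G'-S'$, yielding a partition $C_1\setminus\{c\}=N_2\sqcup N_3$. If both parts are nonempty, picking $c_1'\in N_2$ and $c_1''\in N_3$ makes $\{c,c_1',c_1'',\ell_2\}$ an induced diamond in $G'-S'$, contradiction. Otherwise some $\ell_j$ is adjacent to all of $C_1\setminus\{c\}$ in $G'-S'$, in particular to at least two vertices of $B\setminus\{v\}$. A parallel template handles the diamond case, using an auxiliary $c_1'\in C_1\setminus\{c,d\}$ to produce an induced diamond in $G'-S'$ unless the opposite vertex $b$ is adjacent to all of $C_1$.

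The hardest step will be closing off the last ``$\ell_j$ adjacent to two or more vertices of $B\setminus\{v\}$'' case. If $\ell_j\in X$, then Lemma~\ref{lem-every-vertex-in-x-cannot-adjacenet-to-only-two-vertices-in-a-bag} applied in $G$ forces $\ell_j\in A(B)\subseteq C_1$, contradicting $\ell_j\notin C_1$. If $\ell_j\in V(G-X)\setminus B$, one iterates the induced-diamond argument inside $G'-S'$, using that $B\setminus\{v\}$ is a big clique surviving $S'$, to show $\ell_j$ must in fact be adjacent to all of $B\setminus\{v\}$ in $G'-S'$, contradicting the maximality of $B\setminus\{v\}$ as a clique in $G'-X$.
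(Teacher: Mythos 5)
Your proof is correct and follows essentially the same route as the paper's: both directions rest on Lemma~\ref{lem:large_bag_disjiont_with_solution} (applied to $B$ in $G$ and to the big bag $B\setminus\{v\}$ in $G'$) together with Lemma~\ref{lem_untached_bag_cliques_and_isolated_vertices} applied to the border bag $B'$. The paper dismisses the final verification as ``easy to verify/check''; your decomposition of $N(v)$ into the two cliques $C_1$ and $C_2$ with no cross edges (Condition~\ref{p:nonadj} plus packing maximality), and the ensuing case analysis on the role of $v$ in the forbidden structure, supplies exactly those omitted details and checks out.
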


\begin{proof}
Let~$B$,~$B'$, and~$v$ be as stipulated in Rule~\ref{rule_remove_one_vertex_from_a_big_bag}.
In the following, we show that any minimal CDH set of~$G$ of size at most~$k$ is a CDH set of~$G'$, and vice versa.

$(\Rightarrow)$ Let~$S$ be a minimal CDH set of~$G$ of size at most~$k$. We show that~$S$ is a CDH set of~$G'$. Due to Lemma~\ref{lem:large_bag_disjiont_with_solution},~$S$ and $E(B\cup A(B))$ are disjoint. Moreover, due to Lemma~\ref{lem_untached_bag_cliques_and_isolated_vertices}, $(G-S)[B']$ consists of a clique and (possibly) several isolated vertices. We show that no induced claw or diamond occur after deleting all edges incident to~$v$ in $E(B\cup A(B))$ from $G-S$, i.e., in~$G'-S$. For the sake of contradiction, assume that there is an induced claw or diamond~{\fst} in $G'-S$. We distinguish between the following cases.
\begin{figure}
\centering
{
\includegraphics[width=0.75\textwidth]{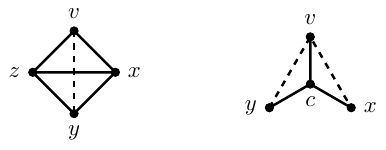}
}
\caption{Illustration of Case~1 (left) and Case~2 (right) in the proof of Theorem~\ref{lem_rule_remove_one_vertex_from_a_big_bag_is_sound}.}
\label{fig-lem-12}
\end{figure}
\begin{description}
\item {\bf{Case~1:}} {\fst} is a diamond.

Without loss of generality, let the vertices of {\fst} be~$v$,~$x$,~$y$, and~$z$ with the edge between~$v$ and~$y$ missing (see Figure~\ref{fig-lem-12}). So, we know that $y\in B\cup A(B)$.

If $y\in A(B)$, as~$B'$ is unattached, there is a vertex $u\in B'$ which is not adjacent to~$y$ in~$G$. We claim that neither of~$x$  and~$z$ is in the modulator~$X$. Assume for the sake of contradiction that $x\in X$. Then, as $N(v)\cap X=A(B)$, and~$v$ and~$x$ are adjacent in~$G$, it holds that $x\in A(B)$. However, in this case the edge between~$v$ and~$x$ cannot be in {\fst} (it is removed by Rule~\ref{rule_remove_one_vertex_from_a_big_bag} and hence not in~$G'$), a contradiction. Due to symmetry, we can show that~$z\not\in X$.
Then, as both~$x$ and~$z$ are adjacent to~$v$, from Conditions~\ref{p:ver} and~\ref{p:edg} of Lemma~\ref{lem:bag-decomposition}, it holds that $x, z\in B\cup B'$. However, neither of~$x$ and~$z$ can be in~$B$, since otherwise the edge between~$v$ and~$x$ (if $x\in B$), or the one between~$v$ and~$z$ (if $z\in B$) is removed by Rule~\ref{rule_remove_one_vertex_from_a_big_bag} and cannot be in {\fst}. Neither of them can be in~$B'$ either, since otherwise~$B'$ is attached to~$y$, a contradiction too.

Now we consider the case where $y\in B$. First, neither of~$x$ and~$z$ can be in~$B$, since otherwise they have been removed by Rule~\ref{rule_remove_one_vertex_from_a_big_bag}. They cannot be in~$B'$ according to Condition~\ref{p:nonadj} of Lemma~\ref{lem:bag-decomposition}. It remains only the case that both~$x$ and~$z$ are in the modulator~$X$. Then, as $N(v)\cap X=A(B)$, we know that $x,z\in A(B)$. However, in this case the edges between~$v$ and $\{x, z\}$ cannot be in {\fst} since they have been removed by Rule~\ref{rule_remove_one_vertex_from_a_big_bag}.

\item {\bf{Case~2:}} {\fst} is a claw.

Without loss of generality, let the vertices of {\fst} be~$v$,~$x$,~$y$, and~$c$ (see Figure~\ref{fig-lem-12}). Since {\fst} occurs only after deleting some edges as stipulated in Rule~\ref{rule_remove_one_vertex_from_a_big_bag}, it must be that~$v$ is a leave of {\fst}. Without loss of generality, let us assume that~$c$ is the center of {\fst}. So, at least one of~$x$ and~$y$ is in $B\cup A(B)$. By symmetry, let us assume that $x\in B\cup A(B)$. Similar to the above analysis, we further consider two cases and show that both cases lead to contradictions.

We consider first the case where $x\in A(B)$. Note that~$c$ cannot be in $B\cup A(B)$, since otherwise the edge between~$v$ and~$c$ has been removed by Rule~\ref{rule_remove_one_vertex_from_a_big_bag}. It cannot be in~$B'$ either, since otherwise~$B'$ is attached to~$x$ by Lemma~\ref{lem-every-vertex-in-x-cannot-adjacenet-to-only-two-vertices-in-a-bag}. Therefore, it must be that $c\in X\setminus A(B)$. However, as~$v$ is adjacent to~$c$, this contradicts with $N(v)\cap X=A(B)$.

Now we consider the second case where $x\in B$. Similar to the above analysis, we can first claim that $c\not\in B\cup A(B)$. In addition, by Condition~\ref{p:nonadj} of Lemma~\ref{lem:bag-decomposition}, $c\not\in B'$. So, it must be that $c\in X$. However, as both~$v$ and~$x$ are adjacent to~$c$ in {\fst}, we know that $c\in A(B)$ by Lemma~\ref{lem-every-vertex-in-x-cannot-adjacenet-to-only-two-vertices-in-a-bag}, a contradiction.
\end{description}

$(\Leftarrow)$ Now we prove the other direction. Let~$S'$ be a minimal CDH set of~$G'$ of size at most~$k$. Notice that~$X$ is a modulator of~$G'$
.
Moreover, $B\setminus \{v\}\cup A(B)$ is a big bag in~$G'$. Hence, due to Lemma~\ref{lem:large_bag_disjiont_with_solution},~$S'$ and $E((B\setminus \{v\})\cup A(B))$ are disjoint, meaning that $(B\setminus \{v\})\cup A(B)$ remains as a clique in $G'-S'$. Moreover, $(G'-S')[B']$ consists of a clique and (possibly) several isolated vertices.
We show now that adding all edges incident to~$v$ in $E(B\cup A(B))$ to $G'-S'$ does not result in induced claws or diamonds. Assume, for the sake of contradiction, that after adding these edges there is an induced claw or diamond~{\fst}. By symmetry, we have~11 possibilities over~{\fst} to consider, as depicted in Figure~\ref{fig-lem-12-c}.

\begin{figure}
\centering
{
\includegraphics[width=0.85\textwidth]{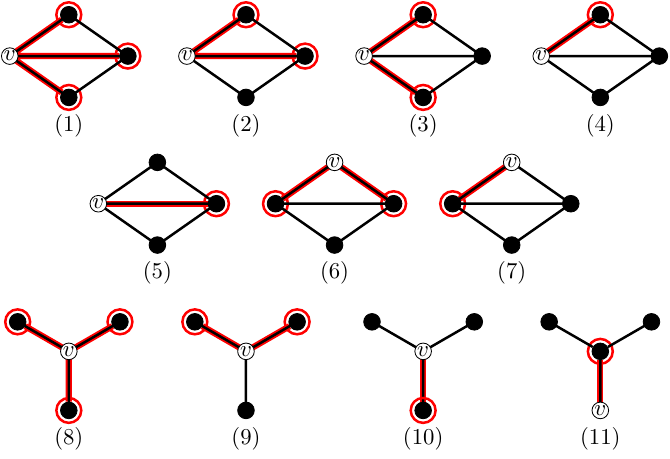}
}
\caption{All possible cases of {\fst} in the proof of Theorem~\ref{lem_rule_remove_one_vertex_from_a_big_bag_is_sound} for the~$(\Leftarrow)$ direction. Vertices circulated by a red circle belong to $B\cup A(B)$ (i.e., vertices in $F_1$).}
\label{fig-lem-12-c}
\end{figure}

Let~$F_1=(V$({\fst})$\setminus \{v\})\cap (B\cup A(B))$ be the subset of vertices in {\fst}  except~$v$ that are from $B\cup A(B)$. Moreover, let $F_2=V$({\fst})$\setminus (F_1\cup \{v\})$. Clearly, $F_1\neq \emptyset$ (otherwise~{\fst} exists in $G'-S'$, contradicting that~$S'$ is a CHD set of~$G'$) and $F_1$ is a clique (since $F_1\subseteq (B\setminus \{v\}\cup A(B))$ which is a clique as discussed above) in $G'-S'$. We show that all the~11 cases shown in~Figure~\ref{fig-lem-12-c} are impossible. 
Keep in mind that $(B\setminus \{v\})\cup A(B)$ is a big clique, and hence every vertex in $B\setminus \{v\}$ is adjacent to every vertex in~$F_1$ in $G'-S'$.
\begin{itemize}
\item Cases (1), (3), (8), (9) are impossible because in these cases~$F_1$ is not a clique.

\item Now we consider Cases~(2) and~(6). For both cases, we have that $\abs{F_2}=1$. Let~$F_2=\{w\}$ and~$F_1=\{x, y\}$ in each case.

In Case~(2), as $w\not\in F_1$ and $N(v)\cap X=A(B)$, we know that $w\not\in X$, meaning that~$w$ is from some bag of $\cB(G-X)$ that contains~$v$. By Conditions~\ref{p:ver} and~\ref{p:edg} of Lemma~\ref{lem:bag-decomposition} and the fact that $w\not\in F_1\subseteq B$, it holds that $w\in B'$. As~$w$ is adjacent to one of~$x$ and~$y$, say~$x$, by Condition~\ref{p:nonadj} of Lemma~\ref{lem:bag-decomposition},~$x\not\in B$. It follows that $x\in A(B)\subseteq X$. However, in this case $B'$ is attached to $x$, a contradiction.

 In Case~(6),~$v$ and~$w$ are not adjacent. If $w\in X$, there is at least one vertex in~$B\setminus \{v\}$, say,~$u$, which is not adjacent to~$w$, since otherwise by Lemma~\ref{lem-every-vertex-in-x-cannot-adjacenet-to-only-two-vertices-in-a-bag},~$B$ is attached to~$w$ which contradicts that~$v$ and~$w$ are not adjacent. Then,  we can obtain an induced  diamond in~$G'-S'$ by replacing~$v$ with~$u$ in~{\fst}, contradicting that~$S'$ is a CDH set of~$G'$. If~$w\not\in X$, then~$w$ is from some bag $C\in \cB(G-X)$. Let~$u'$ be any arbitrary vertex in $B\setminus \{v\}$.  As~$v$ is not adjacent to~$w$, it holds that $C\neq B$. Then, by Condition~\ref{p:nonadj} of Lemma~\ref{lem:bag-decomposition},~$u'$ is not adjacent to~$w$. In this case, replacing~$v$ with~$u'$ in {\fst} gives us an induced diamond in $G'-S'$, a contradiction too.

\item Now we consider Cases~(4), (5), (7), (10), and (11). Let $F_1=\{w\}$, and $F_2=\{x,y\}$ in each case.

For Cases (4), (5) and~(7), it holds that $x, y\not\in B\cup A(B)$. We know then that $x,y\in B'$. By Condition~\ref{p:nonadj} of Lemma~\ref{lem:bag-decomposition},~$w$ cannot be in~$B$. By the definition of~$F_1$, it must be that $w\in A(B)$. However, by Lemma~\ref{lem-every-vertex-in-x-cannot-adjacenet-to-only-two-vertices-in-a-bag},~$B'$ is attached to~$w$, a contradiction.

For Case~(10), as both~$x$ and~$y$ are adjacent to~$v$, and they are not from~$F_1$, it holds that $x, y\in B'$. However, as~$x$ is not adjacent to~$y$, this contradicts that $(G'-S')[B']$ consists of a clique and (possibly) several isolated vertices.

For Case~(11), if $w\in B$, 
let~$u$ be any arbitrary vertex from $B\setminus \{v, w\}$. As $B\setminus \{v\}$ is a big clique in~$G'$, $S'$ does not contain any edge in $E(B\setminus \{v\})$. So,~$u$ is adjacent to~$w$ in $G'-S'$. Moreover,~$w$ is not adjacent to any of~$x$ and~$y$ in $G'-S'$. Suppose for the sake of contradiction that~$w$ is adjacent to~$x$ (resp.\ $y$). If $x\in X$ (resp.\ $y\in X$),~$B$ is attached to~$x$ (resp.\ $y$), and hence it holds that $x\in A(B)$ (resp.\ $y\in A(B)$). This contradicts that $x\in F_2$ (resp.\ $y\in F_2$). If $x\not\in X$ (resp.\ $y\not\in X$), then~$x$ is in some bag other than~$B$ and~$B'$. However, this contradicts with Condition~\ref{p:nonadj} of Lemma~\ref{lem:bag-decomposition}. Now, it is easy to see that after replacing~$v$ with~$u$ in {\fst}, we obtain another induced claw formed by~$u$,~$w$,~$x$, and~$y$ in  $G'-S'$, contradicting that~$S'$ is a CDH set of~$G'$.
If $w\in X$, we first show that at most one vertex from~$B$ is adjacent to~$x$, and at most one vertex from~$B$ is adjacent to~$y$. By symmetry, we only give the proof for~$x$. If~$x\in X$, then  as $x\not\in F_1$, we know that~$B$ is not attached to~$x$. Then, by Lemma~\ref{lem-every-vertex-in-x-cannot-adjacenet-to-only-two-vertices-in-a-bag}, at most one vertex in~$B$ is adjacent to~$x$. Otherwise,~$x$ is from some bag, and moreover, this bag is neither~$B$ (since $x\in F_2$) nor~$B'$ (otherwise~$B'$ is attached to~$w$). So, by Lemma~\ref{lem:bag-decomposition}, none of the vertices in~$B$ is adjacent to~$x$. Finally, as $B\setminus \{v\}$ is a big bag, we know that there is at least one vertex $u\in B$ which is not adjacent to any of~$x$ and~$y$ in $G'-S'$. Then, replacing~$v$ with~$u$ in {\fst} offers us a new induced claw in $G'-S'$, a contradiction.
\end{itemize} 

This completes the proof that Rule~\ref{rule_remove_one_vertex_from_a_big_bag} is sound.
\end{proof}

Finally, we study a reduction rule to bound the size of each border bag.

\begin{myrule}
\label{Rule_replacing_D_2}
If there is a border bag~$B$ of size at least $2k+3$, then delete all edges in $E(B)$ and, moreover, for each attached bag that shares a vertex~$v$ with~$B$, add $2k+1$ new vertices and add edges so that these newly added vertices and~$v$ form a clique.
\end{myrule}

\begin{lemma}[\cite{DBLP:journals/mst/CyganPPLW17}]
\label{lem-only-one-bag-including-a-vertex-is-attached}
Let~$v$ be a vertex in $G-X$ adjacent to a vertex $x\in X$.
Then there is exactly one bag in $\cB(G-X)$ that contains~$v$ and is attached to~$x$.
\end{lemma}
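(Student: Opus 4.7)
The plan is to start by invoking Condition~\ref{p:ver} of Lemma~\ref{lem:bag-decomposition} to conclude that $v$ lies in exactly two bags $B_1$ and $B_2$ of $\cB(G-X)$, and then to prove ``exactly one of them is attached to $x$'' by separately establishing existence (at least one) and uniqueness (at most one). In both directions the strategy is contradiction via the maximality of the claw/diamond packing defining $X$: since that packing consists of subgraphs whose vertices all lie in $X$, any induced claw or diamond of $G$ using an edge of $G-X$ is automatically edge-disjoint from the packing and hence forbidden.

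For uniqueness, I would split on whether one of the bags, say $B_1$, equals $\{v\}$. If $B_1 = \{v\}$, then the two clauses in the definition of ``attached'' are directly contradictory: $B_1$ being attached to $x$ means $x$ is not adjacent to every vertex of the other bag $B_2$ (which must have size at least $2$ by the remark after Lemma~\ref{lem:bag-decomposition}), whereas $B_2$ being attached to $x$ means exactly the opposite. Otherwise $|B_1|,|B_2|\geq 2$ and both being attached forces $x$ to be adjacent to every vertex of $B_1\cup B_2$; picking $u\in B_1\setminus\{v\}$ and $w\in B_2\setminus\{v\}$, Condition~\ref{p:nonadj} of Lemma~\ref{lem:bag-decomposition} gives that $u,w$ are non-adjacent, so $\{x,u,v,w\}$ induces a diamond whose edges $(u,v)$ and $(v,w)$ lie in $G-X$, contradicting maximality.

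For existence, I would assume neither bag is attached to $x$ and split the same way. If $B_1=\{v\}$, then $B_1$ failing to be attached (given that $v$ is adjacent to $x$) forces $x$ to be adjacent to every vertex of $B_2$, making $B_2$ attached; contradiction. If both $|B_1|,|B_2|\geq 2$, then Lemma~\ref{lem-every-vertex-in-x-cannot-adjacenet-to-only-two-vertices-in-a-bag}, together with $x$ being adjacent to $v\in B_1\cap B_2$, forces $v$ to be the \emph{only} neighbor of $x$ in each of $B_1$ and $B_2$. Selecting $u\in B_1\setminus\{v\}$ and $w\in B_2\setminus\{v\}$, Condition~\ref{p:nonadj} of Lemma~\ref{lem:bag-decomposition} again makes $u,w$ non-adjacent, so $\{v,x,u,w\}$ induces a claw centered at $v$; its edges $(v,u)$ and $(v,w)$ sit entirely in $G-X$, contradicting maximality of the packing.

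The main obstacle is simply keeping the case analysis for simplified-vertex bags straight, because the definition of ``attached'' treats $|B|=1$ and $|B|\geq 2$ asymmetrically, so each contradiction must cite the correct clause. Once that bookkeeping is in place, both directions close uniformly by exhibiting an induced claw or diamond whose edges lie in $G-X$ and invoking maximality of the packing.
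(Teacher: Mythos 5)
Your proof is correct and, for the uniqueness half, takes exactly the paper's route: if both bags containing $v$ were attached to $x$, neither can be the singleton $\{v\}$ (the two clauses of the attachment definition clash), and then $x$, $v$, and one vertex from each bag induce a diamond edge-disjoint from the packing, contradicting maximality. The paper's sketch stops there and does not argue the existence half at all, whereas you correctly supply it via Lemma~\ref{lem-every-vertex-in-x-cannot-adjacenet-to-only-two-vertices-in-a-bag} and an induced claw centered at $v$ (again edge-disjoint from the packing), so your write-up is in fact more complete than the paper's.
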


The main idea of the proof of Lemma~\ref{lem-only-one-bag-including-a-vertex-is-attached} is as follows. If both bags including~$v$, say~$B$ and~$B'$,  were attached to~$x$ (observe that $B\setminus \{v\}\neq\emptyset$ and $B'\setminus \{v\}\neq \emptyset$ hold), then, one can check that~$x$,~$v$, any vertex from $B\setminus \{v\}$, and any vertex from $B'\setminus \{v\}$ induce a diamond that is edge-disjoint from all induced claws and diamonds in~$G[X]$, a contradiction.

Armed with Lemma~\ref{lem-only-one-bag-including-a-vertex-is-attached}, we are ready to prove the soundness of Rule~\ref{Rule_replacing_D_2}.

\begin{lemma}
\label{lem_rule_replacing_D_2}
Rule~\ref{Rule_replacing_D_2} is sound.
\end{lemma}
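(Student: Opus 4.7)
The plan is to follow the two-direction strategy of Lemma~\ref{deletevertex}, augmented by two preparatory observations that tightly constrain where any minimum CDH can appear. First, since $B$ is a border bag (hence unattached) of size at least $\nsmbg$, Lemma~\ref{lem_untached_bag_cliques_and_isolated_vertices} forces $G[B]-S$ to consist of a clique plus some isolated vertices for any minimal CDH $S$ of $G$ with $|S|\leq k$; but isolating any vertex of $B$ would require deleting at least $|B|-1\geq \nsmbgmone>k$ edges, so $S\cap E(B)=\emptyset$. Symmetrically, each newly inserted clique $C_v=\{v\}\cup U_v$ has size $\nsmbg$ and is a big unattached bag in $G'$ (the vertices of $U_v$ have no neighbours in $X$, and $C_v$ is a maximal clique sharing only $v$ with an attached bag), so any minimal CDH $S'$ of $G'$ with $|S'|\leq k$ is disjoint from every $E(C_v)$; in particular $S'\subseteq E(G)\setminus E(B)$.

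For the forward direction, I take a minimal CDH $S$ of $G$ and claim $S$ itself is a CDH of $G'$. Suppose, for contradiction, that an induced claw or diamond $\fst$ exists in $G'-S$. If $V(\fst)$ contains no new vertex, all pairs of $V(\fst)\cap B$ are non-edges of $\fst$ (since $E(B)$ is deleted in $G'$); the diamond case immediately contradicts Lemma~\ref{lem:diamond_inside_bag} applied to $\fst$ viewed as a (non-induced) diamond subgraph of $G$, which would force $V(\fst)\subseteq B$; the claw case is handled analogously using Lemma~\ref{lem_claw_inside_bag} together with Lemma~\ref{lem-every-vertex-in-x-cannot-adjacenet-to-only-two-vertices-in-a-bag} to exclude the centre being in $X$. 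If $V(\fst)$ contains a new vertex $u\in U_v$, then $N_{G'}(u)\subseteq C_v$ is a clique, which rules out $u$ being the centre of a claw (the three leaves would be pairwise adjacent) and rules out $\fst$ being a diamond (each of $u$'s two or three $\fst$-neighbours would have to lie in $C_v$, and completing the diamond with a fourth vertex forces the whole diamond into the clique $C_v$). The only remaining configuration is a claw with centre $v$ and leaves $u,\ell_2,\ell_3$ where $\ell_2,\ell_3\notin C_v$; using Condition~\ref{p:nonadj} of Lemma~\ref{lem:bag-decomposition} together with Lemma~\ref{lem-every-vertex-in-x-cannot-adjacenet-to-only-two-vertices-in-a-bag}, I place $\ell_2,\ell_3$ either in $X$ or in the unique attached bag $B_A$ of $v$ (Lemma~\ref{lem-only-one-bag-including-a-vertex-is-attached}), and show that for any $w\in B\setminus\{v\}$ the quadruple $\{v,\ell_2,\ell_3,w\}$ induces a claw in $G-S$, contradicting $S$ being a CDH of $G$.

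For the reverse direction, I take a minimal CDH $S'$ of $G'$ and, using $S'\subseteq E(G)\setminus E(B)$, claim $S'$ is a CDH of $G$. Suppose, for contradiction, an induced claw or diamond $\fst$ in $G-S'$. If $|V(\fst)\cap B|\leq 1$, then the induced subgraphs of $G-S'$ and $G'-S'$ on $V(\fst)$ coincide (no $E(B)$-edge lies inside $V(\fst)$), so $\fst$ is induced in $G'-S'$, a contradiction. Otherwise every pair in $V(\fst)\cap B$ is an edge of $\fst$; in the diamond case, Lemma~\ref{lem:diamond_inside_bag} applied in $G$ forces $V(\fst)\subseteq B$, making $\fst$ a $K_4$ rather than a diamond, a contradiction. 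In the claw case, $|V(\fst)\cap B|=2$ and this pair must be a centre $a$ together with one leaf $b$. If $a$ is shared with an attached bag, I substitute $b$ by a fresh $u\in U_a$ to obtain the induced claw $\{a,u,\ell_2,\ell_3\}$ in $G'-S'$, contradicting $S'$. If $a$ is not shared (both bags of $a$ in $\cB(G-X)$ are unattached), a standard application of Lemma~\ref{lem-every-vertex-in-x-cannot-adjacenet-to-only-two-vertices-in-a-bag} (combined with the maximality of $X$, in the spirit of the paragraph preceding Lemma~\ref{lem_rule_remove_one_vertex_from_a_big_bag_is_sound}) shows $a$ has no neighbour in $X$, forcing $\ell_2,\ell_3$ into the other bag $B_a$ of $a$; since $B_a$ is a clique of size at least two, this makes $(\ell_2,\ell_3)$ an edge rather than the required non-edge of $\fst$, the final contradiction.

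The main obstacle I expect is the forward subcase involving a new vertex $u$ and centre $v$: tracking the various possible locations of $\ell_2,\ell_3$ (in $X$, in $B_A\setminus\{v\}$, or mixtures, together with the delicate cases where $B_A$ is a small or singleton bag and some edge of $E(B_A)$ may belong to $S$) while maintaining the induced-claw property of $\{v,\ell_2,\ell_3,w\}$ in $G-S$ requires careful repeated use of Lemma~\ref{lem-every-vertex-in-x-cannot-adjacenet-to-only-two-vertices-in-a-bag} and the non-adjacency property of distinct bags sharing $v$ from Condition~\ref{p:nonadj} of Lemma~\ref{lem:bag-decomposition}.
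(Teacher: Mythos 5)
Your overall strategy coincides with the paper's: both directions hinge on swapping vertices of the new cliques $C_v$ with vertices of $B\setminus\{v\}$ (and vice versa), using Lemma~\ref{lem:large_bag_disjiont_with_solution} and Lemma~\ref{lem_untached_bag_cliques_and_isolated_vertices} to guarantee that neither $E(B)$ nor any $E(C_v)$ meets a minimal solution of size at most $k$. You are in fact more explicit than the paper in the forward direction (verifying that a replacement vertex $w\in B\setminus\{v\}$ is non-adjacent to the remaining leaves via Condition~\ref{p:nonadj} of Lemma~\ref{lem:bag-decomposition} and Lemma~\ref{lem-every-vertex-in-x-cannot-adjacenet-to-only-two-vertices-in-a-bag}), and you correctly isolate a subcase that the paper's reverse direction silently skips, namely when the second bag $B_a$ of the claw's centre $a$ is unattached, so that no clique $C(B_a)$ is available for the substitution.

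However, your resolution of precisely that subcase has a genuine gap. You derive $\ell_2,\ell_3\in B_a$ and conclude that $(\ell_2,\ell_3)$ ``is an edge rather than the required non-edge'' of the claw. But the claw lives in $G-S'$, so its leaves only need to be non-adjacent in $G-S'$; the pair $(\ell_2,\ell_3)$ is indeed an edge of $G$, yet it could simply belong to $S'$, in which case there is no contradiction. To close this you must rule out $(\ell_2,\ell_3)\in S'$: observe that $X$ remains a modular of $G'$ and $B_a$ remains an unattached bag of $\cB(G'-X)$, so Lemma~\ref{lem_untached_bag_cliques_and_isolated_vertices} makes $(G'-S')[B_a]$ a clique plus isolated vertices; since $(a,\ell_2)$ and $(a,\ell_3)$ survive in $G-S'$ and lie outside $E(B)$, they also survive in $G'-S'$, so neither $\ell_2$ nor $\ell_3$ is isolated there, forcing $a,\ell_2,\ell_3$ into the surviving clique and hence $(\ell_2,\ell_3)\notin S'$. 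With that addition (and keeping the same edge-of-$G$ versus edge-of-$G-S$ distinction in mind throughout, which you do handle correctly elsewhere), the argument goes through.
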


\begin{proof}
Let~$B$ be a bag as stipulated in Rule \ref{Rule_replacing_D_2}. For each attached bag~$B'$ sharing a vertex with~$B$, let 
~$C(B')$ be the set of the~$2k+1$ newly introduced vertices for~$B'$. Let~$C$ be the set of all newly introduced vertices in Rule~\ref{Rule_replacing_D_2}. We prove the soundness as follows.

Let~$S$ be a minimal CDH set of~$G$ of size at most~$k$. We claim that~$S$ is a CDH set of~$G'$. For the sake of contradiction, assume that this is not the case, and let {\fst} be an induced claw or diamond in $G'-S$. Clearly, at least two vertices of {\fst} are in $B\cup C$, since otherwise {\fst} exists in $G-S$, contradicting that~$S$ is a CDH set of~$G$. 
We claim that at most one vertex of~{\fst} can be in~$B$.  In fact, as~$B$ is unattached in~$G$,  due to Lemmas~\ref{lem:diamond_inside_bag} and~\ref{lem_claw_inside_bag}, if~$B$ contains at least two vertices of {\fst}, then all the vertices of {\fst} are in~$B$ if {\fst} is a diamond, and the center and at least one leaf of {\fst} are in~$B$ if~{\fst} is a claw, which contradicts that~$B$ is an independent set of~$G'-S$. So, the claim holds.  
Let~$K$ be the set of attached bags~$B'$ sharing a vertex with~$B$ such that~$C(B')$ contains at least one vertex of~{\fst}. The above discussions imply that $K\neq \emptyset$. We claim that~$|K|=1$. For the sake of contradiction, assume that $|K|\geq 2$. Let~$B_1$ and~$B_2$ be any two bags in~$K$. Clearly, the distance between any vertex in~$C(B_1)$ and any vertex in~$C(B_2)$ is at least~$3$ in~$G'$. However, the distance between every two vertices in~{\fst} is at most~$2$, a contradiction. So, let~$B'$ be the only bag in~$K$ and~$v$ the common vertex of~$B$ and~$B'$. Let~$D$ be the set of vertices of~{\fst} in~$C(B')$. Due to the definition of~$C(B')$,~$v$ is the only vertex in~$G$ which is adjacent to vertices in~$C(B')$. In other words,~$v$ separates~$C(B')$ from all the other vertices. This implies that~{\fst} is a claw and~$v$ is the center of~{\fst}. As~$C(B')$ form a clique in~$G'-S$, we know that~$D$ is a singleton consisting of one of the leaves of~{\fst}. Without loss of generality, let $D=\{\ell\}$, and let~$\ell_1$ and~$\ell_2$ denote the other two leaves of~{\fst}. Due to Conditions~\ref{p:ver} and~\ref{p:edg}  of Lemma~\ref{lem:bag-decomposition}, it holds that $\ell_1, \ell_2\in B\cup B'$. As~$v\in B$ and~$B$ is an independent set in~$G'-S$, it follows that $\ell_1, \ell_2\in B'$ (and we know that the edge between~$\ell_1$ and~$\ell_2$ is contained in~$S$). Let~$u$ be any arbitrary vertex in $B\setminus \{v\}$. Due to Condition~\ref{p:nonadj} of Lemma~\ref{lem:bag-decomposition},~$u$ is not adjacent to any of~$\ell_1$ and~$\ell_2$ in~$G$. This implies that~$\ell_1$,~$\ell_2$,~$v$, and~$u$ form an induced claw in $G-S$, contradicting that~$S$ is a CDH set of~$G$. 

We now prove the opposite direction. Let~$S$ be a minimal CDH set of~$G'$ of size at most~$k$. Obviously, $G'-S-C$ is still \{claw, diamond\}-free. Due to Lemma~\ref{lem:large_bag_disjiont_with_solution},~$S$ excludes all edges between vertices in~$C$. We claim that~$S$ is a CDH set of~$G$. Assume that this is not the case, and let~{\fst} be a forbidden structure in $G-S$. Hence,~$B$ includes at least two vertices of~{\fst}, since otherwise~{\fst} exists in $G'-S-C$, a contradiction. Then, if~{\fst} is an induced diamond, due to Lemma~\ref{lem:diamond_inside_bag}, all vertices of~{\fst} are in~$B$, contradicting that~$B$ is a clique in $G-S$. If~{\fst} is an induced claw, then it must be that the center of~{\fst} and exactly one leaf of~{\fst} are in~$B$ (as~$B$ is a clique in $G-S$). Let~$c$ be the center and~$\ell$ be the leaf. Let~$B'$ be the other bag including~$c$. Then, replacing~$\ell$ with any vertex in~$C(B')$ in~{\fst} leads to an induced claw in $G'-S$, a contradiction.
\end{proof}

\subsection{Analysis of the Kernel}

Let~$(G,k)$ be an irreducible instance with respect to the above reduction rules, and let~$X$ be a modulator of~$(G, k)$. If $|X|>4k$, we can immediately conclude that the instance is a {\noins} (in this case, we return a trivial {\noins}). Assume now that $|X|\leq 4k$. Cygan~et~al.~\cite{DBLP:journals/mst/CyganPPLW17} observed that for every $x\in X$, there can be at most two bags in $\cB(G-X)$ which are attached to~$x$. In fact, if this is not the case there would be an induced claw (with~$x$ being the center and three vertices from three bags attached to~$x$ being the leaves), contradicting the maximality of~$X$. This observation directly offers an upper bound of the number of attached bags.

\begin{lemma}
\label{lem_number_of_D_1_bags}
There are at most~$8k$ attached bags in~$\cB(G-X)$.
\end{lemma}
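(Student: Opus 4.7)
The plan is to derive the bound by combining two ingredients: a per-vertex bound on the number of attached bags, and the assumption $|X|\leq 4k$.

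First I would prove the key per-vertex claim: for every $x\in X$, there are at most two bags in $\cB(G-X)$ attached to $x$. This proceeds by contradiction: suppose $B_1,B_2,B_3$ are three distinct bags of $G-X$ all attached to $x$. I would pick, from each $B_i$, a vertex $v_i$ that is adjacent to $x$; by the definition of attachment, if $|B_i|\geq 2$ then every vertex of $B_i$ works, while if $B_i=\{v\}$ then $v$ itself is forced, and $v$ is adjacent to $x$. The aim is to show that $v_1,v_2,v_3$ can be chosen pairwise non-adjacent in $G-X$, so that $\{x,v_1,v_2,v_3\}$ induces a claw with center $x$ and leaves $v_1,v_2,v_3$. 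Since the three edges $(x,v_i)$ lie between $X$ and $V(G)\setminus X$, they are edge-disjoint from every induced claw and diamond contained in $G[X]$. Hence the existence of this induced claw would contradict the maximality of the packing that defines $X$.

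The main obstacle is justifying pairwise non-adjacency of the $v_i$. I would use Lemma~\ref{lem:bag-decomposition}: any edge lies in exactly one bag, and two distinct bags share at most one vertex $w$, with no edges between $B_i\setminus\{w\}$ and $B_j\setminus\{w\}$. Thus $v_i$ and $v_j$ can only be adjacent if they belong to a common bag, which forces one of them to equal the (unique) shared vertex of $B_i$ and $B_j$. If both $B_i$ and $B_j$ have size at least two, I can simply choose $v_i,v_j$ away from the possibly shared vertex. The remaining cases are when some $B_i=\{v\}$; then $v$ lies in exactly one other bag $B_i'$ of size at least two, and since $B_i$ is attached to $x$ (in the simplified-vertex sense), $B_i'$ is unattached to $x$, so $B_i'\neq B_j$ for any other $B_j$ attached to $x$. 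Hence $v$ can share a vertex with at most one of the other two bags under consideration, and a careful case distinction shows a non-adjacent triple can always be selected.

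Having established that each $x\in X$ is attached to at most two bags, the lemma follows by double counting. Every attached bag, by definition, is attached to at least one vertex of $X$, so the number of attached bags is bounded by
\[
\sum_{x\in X}\bigl|\{B\in\cB(G-X) : B \text{ is attached to } x\}\bigr|\ \leq\ 2|X|\ \leq\ 8k,
\]
where the last step uses the standing assumption $|X|\leq 4k$ (otherwise the instance is already a trivial \noins). This completes the argument.
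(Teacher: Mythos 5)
Your proof is correct and follows essentially the same route as the paper: the paper simply invokes the observation (attributed to Cygan et al.) that each $x\in X$ can have at most two attached bags---witnessed by an induced claw centered at $x$ whose three edges leave $X$ and are therefore edge-disjoint from the packing---and combines it with $|X|\le 4k$ to get the bound $2\cdot 4k=8k$. One small caveat in your non-adjacency argument: two chosen vertices $v_i\in B_i$ and $v_j\in B_j$ could in principle be adjacent via a \emph{third} bag containing both of them (not only via the shared vertex of $B_i$ and $B_j$); that remaining case is ruled out by Lemma~\ref{lem-every-vertex-in-x-cannot-adjacenet-to-only-two-vertices-in-a-bag} together with Lemma~\ref{lem-only-one-bag-including-a-vertex-is-attached}, so your overall argument goes through.
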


The next lemma bounds the size of each big bag.

\begin{lemma}
\label{lem_size_bound_of_D_1_bags}
Every big bag in $\cB(G-X)$ contains at most~$8k$ vertices.
\end{lemma}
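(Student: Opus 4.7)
The plan is to establish an injection from the vertices of a big bag $B$ into the set of \emph{attached} bags distinct from $B$, and then invoke Lemma~\ref{lem_number_of_D_1_bags} to conclude $|B|<8k$.

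First I would argue that, in the irreducible instance, any big bag $B$ must be an attached bag. Since $B$ has at least $\nbg$ vertices it forms a clique with edges, so Rule~\ref{rule_remove_edges_in_outlier_bags} rules out $B$ being an outlier bag, and Rule~\ref{Rule_replacing_D_2} rules out $B$ being a border bag. Hence the only surviving possibility is that $B$ is attached.

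Next I would apply Condition~\ref{p:ver} of Lemma~\ref{lem:bag-decomposition}: every vertex $v\in B$ lies in exactly two bags of $\cB(G-X)$, namely $B$ itself and a unique second bag $B_v\neq B$. The assignment $v\mapsto B_v$ is injective, because if $u\neq v$ in $B$ satisfied $B_u=B_v$, then $B$ and $B_u$ would share at least two common vertices, contradicting Condition~\ref{p:inters} of Lemma~\ref{lem:bag-decomposition}.

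The key step, and in my view the main obstacle, is ruling out that any $B_v$ is a border bag. Since $B_v$ shares the vertex $v$ with the attached bag $B$, it cannot be an outlier bag, so it is either attached or a border bag. Here I would invoke Rule~\ref{rule_remove_one_vertex_from_a_big_bag}: if the big attached bag $B$ shared any vertex with a border bag, that rule would still be applicable, contradicting irreducibility. Thus every $B_v$ is attached. Combined with the injectivity above and Lemma~\ref{lem_number_of_D_1_bags}, this gives $|B|\leq 8k-1<8k$. The delicate part is making the non-interaction between big attached bags and border bags rigorous, because Rule~\ref{rule_remove_one_vertex_from_a_big_bag} only removes edges incident to the shared vertex rather than rearranging bags, so one must verify that after its exhaustive application the bag decomposition of the residual graph genuinely has no big attached bag adjacent to a border bag.
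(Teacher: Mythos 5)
Your proof is correct and follows essentially the same route as the paper's: show $B$ is attached via Rules~\ref{rule_remove_edges_in_outlier_bags} and~\ref{Rule_replacing_D_2}, map each $v\in B$ injectively to its second bag $B_v$ (injective by Condition~\ref{p:inters} of Lemma~\ref{lem:bag-decomposition}), rule out $B_v$ being a border bag via irreducibility under Rule~\ref{rule_remove_one_vertex_from_a_big_bag}, and apply Lemma~\ref{lem_number_of_D_1_bags}. The only nit is that ``big'' means $|B|\geq \nsmbg$, not $|B|\geq\nbg$, so Rule~\ref{rule_remove_one_vertex_from_a_big_bag} is only guaranteed applicable when $|B|\geq\nbg$; the paper dispatches the case $|B|=\nsmbg$ up front since $\nsmbg\leq 8k$ trivially for $k\geq 1$.
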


\begin{proof}
Let~$B$ be a big bag in $\cB(G-X)$. Assume that $|B|\geq \nbg$ (otherwise, we are done). Due to Rule \ref{rule_remove_edges_in_outlier_bags},~$B$ cannot be an outlier bag. Due to Rule \ref{Rule_replacing_D_2},~$B$ cannot be a border bag too. Hence,~$B$ must be an attached bag. Let~$v$ be any arbitrary vertex in~$B$. Due to Lemma~\ref{lem:bag-decomposition},~$v$ belongs to exactly two bags. Let~$B'$ be the other bag including~$v$. If~$B'$ is a border bag, Rule~\ref{rule_remove_one_vertex_from_a_big_bag} applies. Hence,~$B'$ must be an attached bag. Due to Lemma~\ref{lem:bag-decomposition}, every two bags share at most one vertex. As there are at most~$8k$ attached bags in~$\cB(G-X)$ (Lemma \ref{lem_number_of_D_1_bags}) and~$v$ is chosen arbitrarily,~$B$ includes at most~$8k$ vertices.
\end{proof}

Now we are ready to prove Theorem~\ref{thm-kernel}.

\begin{proof}[Proof of Theorem~\ref{thm-kernel}]
The kernelization applies Rules~~\ref{rule_remove_components}--\ref{Rule_replacing_D_2} until none of them is applicable. Notice that each application of a reduction rule, except Rule \ref{Rule_replacing_D_2}, strictly decreases the size of the instance. Hence, Rules~\ref{rule_remove_components}--\ref{rule_remove_one_vertex_from_a_big_bag} can be applied at most polynomial times. Rule \ref{Rule_replacing_D_2} may increase the size of the instance. However, each application of Rule~\ref{Rule_replacing_D_2} destroys one border bag of size at least~$\nbg$. As there can be at most polynomially many such bags (implied by Lemma~\ref{lem-bags-computable-polynomial-time}), Rule~\ref{Rule_replacing_D_2} can be applied only at most polynomial times too. Moreover, as each application of a reduction rule takes polynomial-time, the kernelization terminates in polynomial time.

It remains to compute the size of the kernel. Let $(G, k)$ be the irreducible instance and~$X$ a modulator of~$(G, k)$. If $|X|>4k$, we return a trivial {\noins}. Assume now that $|X|\leq 4k$. Due to Rules~\ref{rule_remove_components}--\ref{rule_remove_edges_in_outlier_bags}, there are no outlier bags. Moreover, due to Lemmas~\ref{lem_number_of_D_1_bags} and~\ref{lem_size_bound_of_D_1_bags}, the number of vertices in attached bags is bounded by $8k\cdot 8k=\bigo(k^2)$. It remains to bound the number of vertices in border bags. Due to Lemma~\ref{lem:bag-decomposition}, every vertex in $G-X$ is in exactly two bags in $\cB(G-X)$. This implies that there are at most~$\bigo(k^2)$ many border bags. Then, due to Rule~\ref{Rule_replacing_D_2} we can conclude that there are at most $\bigo(k^2)\cdot (2k+2)=\bigo(k^3)$ vertices in border bags. In summary,~$|V(G)|$ is bounded by~$\bigo(k^3)$. 

It remains to analyze the number of edges in~$G$. Clearly, there are at most~$\bigo(k^2)$ edges in~$G[X]$, and at most $4k\cdot \bigo(k^2)=\bigo(k^3)$ edges between~$X$ and attached bags. As there are at most~$8k$ attached bags, and each of them is of size at most~$8k$ (Lemma~\ref{lem_size_bound_of_D_1_bags}), the number of edges between vertices in attached bags is~$\bigo(k^3)$. As discussed above, there are at most~$\bigo(k^2)$ many border bags and each of them is of size at most~$\nsmbg$ (due to Rule~\ref{Rule_replacing_D_2}). Hence, the number of edges between vertices in border bags is bounded by $\bigo(k^2)\cdot \bigo(k^2)=\bigo(k^4)$. According to Condition~\ref{p:nonadj} of Lemma~\ref{lem:bag-decomposition}, there are no other edges. Therefore,~$G$ has at most~$\bigo(k^4)$ edges.
\end{proof}

\section{An {\fpt} Algorithm}
In this section, we study a branching algorithm for {\probshort} to prove Theorem~\ref{thm-algorithm}.
Branching algorithms are commonly used to solve {\nph} optimization problems. In general, a branching algorithm splits (branches) an instance into several subinstances, recursively solves each subinstance, and then combines the solutions of subinstances to a solution of the original instance. A {\it{branching rule}} prescribes how to split the instances. Let~$p$ be a parameter associated with a problem for the purpose of branching (in our case, $p=k$ is the number of edges needed to be deleted). For a branching rule which splits an instance into~$j$ subinstances with new parameters $p-a_1, p-a_2,\dots, p-a_j$,  $ \langle a_1,\dots,a_j \rangle$ is called the {\it{branching vector}} of the branching rule. In addition, the {\it{branching factor}} of the branching rule is the unique positive root of the linear recurrence
\begin{equation}
\label{equ-branching}
x^p-x^{p-a_1}-x^{p-a_2}\dots-x^{p-a_j}=0
\end{equation}
The running time of a branching algorithm is bounded by~$O^*(c^p)$ where~$c$ is the maximum branching factor of all branchings it contains.
For the reader who is unfamiliar with branching algorithms, we refer to~\cite[Chapter~2]{fominKratschBook2010} for a gentle introduction.

As an induced diamond has five edges and an induced claw has three edges, directly branching on edges in induced claws and diamonds leads to an $\bigo^*(5^k)$-time algorithm. Based on refined observations, we derive branching rules leading to an improved algorithm of worst-case running time~$\bigo^*(3.792^k)$.

The first branching rule is on induced claws, i.e., each subinstance after the rule corresponds to the deletion of one edge in the claw considered at the moment.
Clearly, the branching factor of this branching rule is~$3$. The algorithm applies the above branching rule once there are any induced claws in the graph. Hence, before branching upon an induced diamond, we always assume there is no induced claws. Now we derive a branching rule on induced diamonds. Let {\fst} be an induced diamond as shown in the figure below.
\begin{center}
\includegraphics[width=0.18\textwidth]{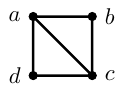}
\end{center}

We distinguish between the following cases. For a collection $\{E_1,E_2,\dots,E_j\}$ of subsets of edges, a branching rule which branches the instance into~$j$ subinstances where the~$i$-th subinstance, $1\leq i\leq j$, is obtained from the original instance by deleting exactly the edges in~$E_i$ and decreasing the parameter~$k$ by~$|E_i|$, is denoted by $\{-E_1,-E_2,\dots,-E_j\}$. Each~$-E_i$ is called a {\it{branching case}} of the branching rule.

Case~1.  If none of the vertices in {\fst} has neighbors outside~{\fst}, we directly delete~{\fst} and decrease~$k$ by one. 

Case~2. The two vertices~$a$ and~$c$ are twins, i.e., $N_G(a)\setminus \{c\}=N_G(c)\setminus \{a\}$. Then, due to symmetry, it suffices to consider the branching rule $\{-\{\edge{a}{d}\}, -\{\edge{a}{c}\}, -\{\edge{a}{b}\}\}$. The branching vector  and the branching factor of this branching rule are clearly $\langle 1,1,1\rangle$ and~$3$, respectively.

Case~3. There is a vertex~$t$ which is adjacent to~$a$ but not to~$c$.  Then,~$t$ must be adjacent to at least one of~$b$ and~$d$, since otherwise there is an induced claw.  We distinguish between two subcases.

\begin{figure*}[ht!]
\begin{center}
\includegraphics[width=0.95\textwidth]{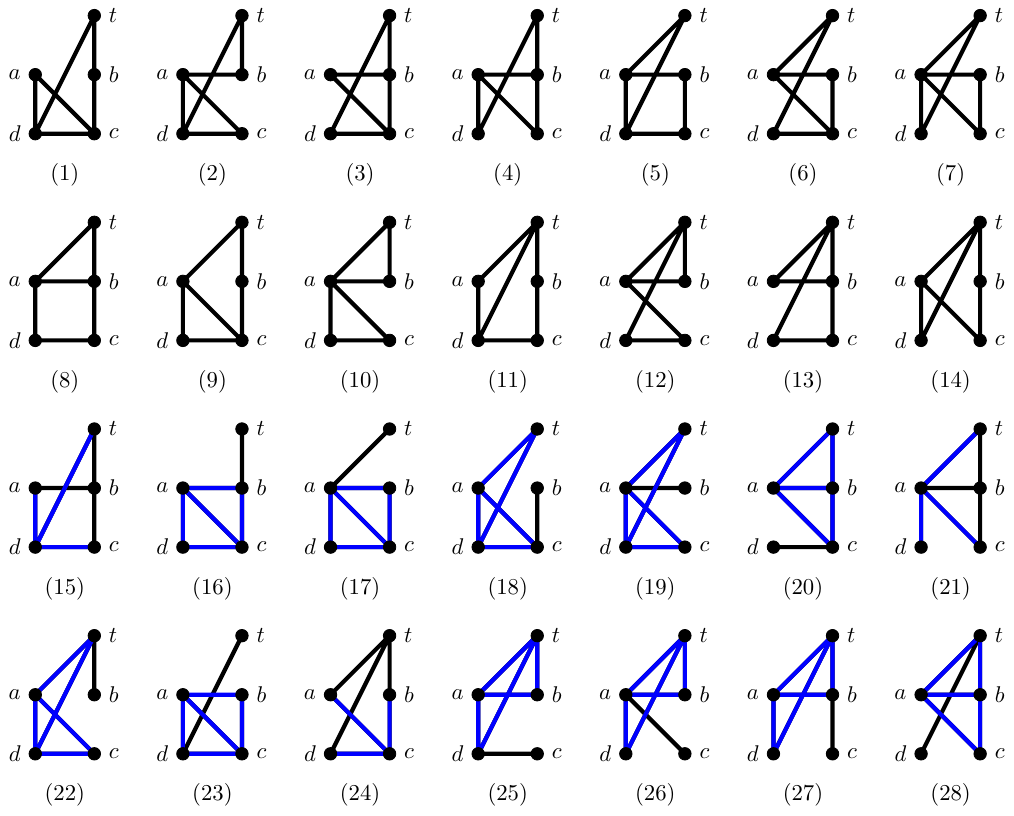}
\end{center}
\caption{All combinations of two edges in the subgraph induced by~$a$,~$b$,~$c$,~$d$, and~$t$. Each figure~($i$), $1\leq i\leq 28$, has two edges in the subgraph induced by~$a$,~$b$,~$c$,~$d$, and~$t$ being deleted.}
\label{fig-many-branchings}
\end{figure*}

Case~3.1. $t$ is adjacent to exactly one of~$d$ and~$b$. Without loss of generality, assume that~$t$ is adjacent to~$b$. Clearly,~$a$,~$t$,~$b$, and~$c$ also induce a diamond which shares the edges~$\edge{a}{b}$,~$\edge{b}{c}$, and $\edge{a}{c}$ with {\fst}. We first branch on deleting each of these three edges (i.e., the three branching cases $-\{\edge{a}{b}\}$, $-\{\edge{b}{c}\}$ and $-\{\edge{a}{c}\}$). Consider the remaining branching cases, i.e., none of~$\edge{a}{b}$,~$\edge{b}{c}$ and~$\edge{a}{c}$ is deleted. Observe that in this case we have to delete at least two edges in order to destroy~{\fst} and the induced diamond formed by~$a$,~$t$,~$b$ and,~$c$. There are in total four branching cases to consider:
\begin{enumerate}
\item  $-\{\edge{a}{d}, \edge{a}{t}\}$;
\item  $-\{\edge{a}{d}, \edge{b}{t}\}$;
\item  $-\{\edge{c}{d}, \edge{a}{t}\}$;  and
\item  $-\{\edge{c}{d}, \edge{b}{t}\}$.
\end{enumerate}
However, observe that after deleting the edges~$\edge{c}{d}$ and~$\edge{b}{t}$, the set $\{a,d,b,t\}$ induces a claw with~$a$ being the center. This implies that we need to delete at least one edges in $\{\edge{a}{d}, \edge{a}{b}, \edge{a}{t}\}$ to destroy the induced claw. In other words, the branching case  $-\{\edge{c}{d}, \edge{b}{t}\}$ has been covered by other cases and hence can be discarded. In summary, we have a branching vector $\langle 1,1,1,2,2,2\rangle$. By solving Equation~(\ref{equ-branching}), we obtain a branching factor~$3.792$.

Case~3.2. $t$ is adjacent to both~$d$ and~$b$. In this case, there are four induced diamonds in the graph induced by~$a$,~$b$,~$c$,~$d$ and,~$t$ (except $\{b,c,d,t\}$, all other $4$-subsets of $\{a,b,c,d,t\}$ induce diamonds). More importantly, at least two edges have to be deleted in order to destroy these four induced diamonds. As there are eight edges in the subgraph induced by $\{a,b,c,d,t\}$, there are in total ${8\choose 2}=28$ cases to consider. \myfig{fig-many-branchings} shows all these~$28$ cases, with the missing edges in the subgraph induced by $\{a,b,c,d,t\}$ being the deleted edges in each case. However, we claim that we need only to consider branching cases (1)--(14). The reason is that in any other case there is still an induced claw or diamond after deleting the corresponding two edges (see the subgraph with blue edges in each case). In order to destroy these induced claws or diamonds, further edges must be deleted. Therefore, each case ($i$) where $15\leq i\leq 28$ is covered by some of the cases (1)--(14). For instance, in Case~(15) (i.e., after deleting the edges $\edge{a}{t}$ and $\edge{a}{c}$), $\{d,a,c,t\}$ induces a claw. To destroy this claw, we need further delete one of the edges in the claw. Clearly, deleting further~$\edge{a}{d}$ is covered by Case~(3), deleting $\edge{d}{t}$ is covered by Case~(8), and deleting~$\edge{d}{c}$ is covered by Case~(4). In summary, in Case~3.2 we have~$14$ branching cases to consider (branching Cases (1)--(14)). As each branching case decreases the parameter~$k$ by two, the corresponding branching factor is the unique positive root of~$x^2=14$ (see Equation~(\ref{equ-branching})), which is~$3.742$.

Case~4. There is a vertex~$t$ which is adjacent to~$c$ but not to~$a$. This case is symmetric to Case~3 and can be dealt with similarly.

Clearly, Case~3 has a branching rule with the maximum branching factor~$3.792$. Hence, the algorithm has worst-case running time~$\bigo^*(3.792^k)$, completing the proof of Theorem~\ref{thm-algorithm}.

\section{Conclusion}
In this paper, we have investigated the kernelization and {\fpt}-algorithm of {\probshort}. In particular, based on five reduction rules, we obtained a kernel of~$\bigo(k^3)$ vertices and~$\bigo(k^4)$ edges, significantly improving the previous kernel with~$\bigo(k^{12})$ vertices and~$\bigo(k^{24})$ edges. In addition, based on refined observations, we devised an {\fpt}-algorithm of running time~$\bigo^*(3.792^k)$. A natural direction for future research could be to investigate whether {\probshort} admits a square vertex kernel. 


\begin{thebibliography}{10}

\bibitem{DBLP:journals/algorithmica/BetzlerBBNU14}
N.~Betzler, H.~L. Bodlaender, R.~Bredereck, R.~Niedermeier, and J.~Uhlmann.
\newblock On making a distinguished vertex of minimum degree by vertex
  deletion.
\newblock {\em Algorithmica}, 68(3):715--738, 2014.

\bibitem{DBLP:journals/ipl/Cai96}
L.~Cai.
\newblock Fixed-parameter tractability of graph modification problems for
  hereditary properties.
\newblock {\em Inf. Process. Lett.}, 58(4):171--176, 1996.



\bibitem{DBLP:journals/corr/abs-2001-06867}
C.~Crespelle, P.~G. Drange, F.~V. Fomin, and P.~A. Golovach.
\newblock A survey of parameterized algorithms and the complexity of edge  modification.
\newblock {\em CoRR}, abs/2001.06867, 2020.

\bibitem{DBLP:journals/algorithmica/CyganMPPS14}
M.~Cygan, D.~Marx, M.~Pilipczuk, M.~Pilipczuk, and I.~Schlotter.
\newblock Parameterized complexity of {E}ulerian deletion problems.
\newblock {\em Algorithmica}, 68(1):41--61, 2014.

\bibitem{DBLP:journals/mst/CyganPPLW17}
M.~Cygan, M.~Pilipczuk, M.~Pilipczuk, E.~J. van Leeuwen, and M.~Wrochna.
\newblock Polynomial kernelization for removing induced claws and diamonds.
\newblock {\em Theory. Comput. Syst.}, 60(4):615--636, 2017.

\bibitem{DBLP:series/txcs/DowneyF13}
R.~G. Downey and M.~R. Fellows.
\newblock {\em Fundamentals of Parameterized Complexity}.
\newblock Texts in Computer Science. Springer, 2013.

\bibitem{fominKratschBook2010}
F.~V. Fomin and D.~Kratsch.
\newblock {\em Exact Exponential Algorithms}, chapter~2, pages 13--30.
\newblock Texts in Theoretical Computer Science An EATCS Series. Springer,
  2010.

\bibitem{DBLP:conf/wg/KomusiewiczNN15}
C.~Komusiewicz, A.~Nichterlein, and R.~Niedermeier.
\newblock Parameterized algorithmics for graph modification problems: On  interactions with heuristics.
\newblock In {\em Proceedings of the 41st International Workshop on Graph-Theoretic Concepts in Computer Science}, Lecture Notes in Computer Science 9224, pages 3--15, 2015.

\bibitem{DBLP:journals/tcs/LiuWYCC15}
Y.~Liu, J.~Wang, J.~You, J.~Chen, and Y.~Cao.
\newblock Edge deletion problems: Branching facilitated by modular
  decomposition.
\newblock {\em Theor. Comput. Sci.}, 573:63--70, 2015.

\bibitem{DBLP:journals/algorithmica/MarxS12}
D.~Marx and I.~Schlotter.
\newblock Obtaining a planar graph by vertex deletion.
\newblock {\em Algorithmica}, 62(3-4):807--822, 2012.

\bibitem{DBLP:conf/iwpec/SandeepS15}
R.~B. Sandeep and N.~Sivadasan.
\newblock Parameterized lower bound and improved kernel for diamond-free edge
  deletion.
\newblock In {\em Proceedings of the 10th International Symposium on Parameterized and Exact Computation}, Leibniz International Proceedings in Informatics 43, pages 365--376, 2015.

\bibitem{DBLP:journals/jda/Sritharan16}
R.~Sritharan.
\newblock Graph modification problem for some classes of graphs.
\newblock {\em J. Discrete Algorithms}, 38-41:32--37, 2016.

\bibitem{DBLP:journals/corr/abs-1908-07318}
D.~Tsur.
\newblock An algorithm for destroying claws and diamonds.
\newblock {\em CoRR}, abs/1908.07318, 2019.

\end{thebibliography}

\end{document}